\newtheorem{theorem}{Theorem}
\newtheorem{lemma}[theorem]{Lemma}
\begin{document}




\title{FlexiComposer: Flexible Boolean Operation based Multiplex Community Composition}
\title{Efficient Community Detection in Boolean Composed Multiplex Networks}

\author{\IEEEauthorblockN{Abhishek Santra\IEEEauthorrefmark{1}, Sanjukta Bhowmick\IEEEauthorrefmark{3} and
Sharma Chakravarthy\IEEEauthorrefmark{4}}
\IEEEauthorblockA{\IEEEauthorrefmark{1}\IEEEauthorrefmark{4}IT Lab and CSE Department, University of Texas at Arlington, Arlington, Texas \\
\IEEEauthorrefmark{3}CSE Department, University of North Texas, Denton, Texas \\
Email: \IEEEauthorrefmark{1}abhishek.santra@mavs.uta.edu,
\IEEEauthorrefmark{3}sanjukta.bhowmick@unt.edu,
\IEEEauthorrefmark{4}sharma@cse.uta.edu}}

\IEEEtitleabstractindextext{%
\begin{abstract}
Networks (or graphs) are used to model the dyadic relations between entities in a complex system. In cases where there exists multiple relations between the entities, the complex system can be represented as a multilayer network, where the network in each layer represents one particular relation (or feature). The analysis of multilayer networks involves combining edges from specific layers and then computing a network property.

Different subsets of the layers can be combined. For any Boolean combination operation (e.g. AND, OR), the number of possible subsets is exponential to the number of layers. Thus recomputing for each subset from scratch is an expensive process. In this paper, we propose to efficiently analyze multilayer networks using a method that we term {\em network decomposition}.

Network decomposition is based on analyzing each network layer individually and then aggregating the analysis results. We demonstrate the effectiveness of using network decomposition for detecting communities on different combinations of network layers. Our results on multilayer networks obtained from real-world and synthetic datasets show that our proposed network decomposition method requires significantly lower computation time while producing results of high accuracy.

\end{abstract}

\begin{IEEEkeywords}
Multilayer network analysis, Community Detection, Time-Accuracy Trade Off.
\end{IEEEkeywords}}

\maketitle

\IEEEdisplaynontitleabstractindextext

%
\IEEEpeerreviewmaketitle


%
%
%
%

\section{Introduction}
\label{sec:introduction}

\IEEEPARstart{N}{etworks} (or graphs) are used to represent the pair-wise relationship between entities in a system. The entities may be connected by multiple relations (or features). For example, traffic accidents can be related if they occur in the same location, or under the same light condition; actors can be related if they acted in movies of the same genre; authors can be related if they publish in the same conferences. The relationships pertaining to each feature can be represented as a network. This group of networks are together termed as a \textit{multiplex} or {\em homogeneous multilayer network}, where each  network (layer) denotes a distinct relationship based on a feature among the same set of entities.


{\em Motivation.} Often different sets of layers need to be combined and analyzed. For example, if a city with limited resources was trying to reduce accidents, they might be interested in finding the {\em top two} conditions that lead to accidents. In this case, the combination of all pairs of layers have to be tested. Another example would be to identify which subset of social networking platforms form the most strongly connected groups. In this case, all possible combinations of the layers representing the platforms have to analyzed.  

In the current approach, multiplexes are analyzed by combining the required layers into a single network, which we term the {\em composed network}. Network analysis operations such as community detection or finding high centrality vertices is then applied to this composed network. It is easy to see that the analysis has to be recomputed from scratch for different combinations of layers, rendering the analysis process very expensive. As a solution, we propose an elegant method, which we term {\em network decomposition}, for finding communities in different compositions of the network layers. 

{\em Our Contribution.} Our main contribution is to apply network decomposition for efficiently finding communities in different Boolean compositions of network layers. The principle is to first analyze the property, such as communities, in each individual layer and then aggregate the results to obtain the final results on the composed network (see Figure \ref{fig:decoupling-HoMLN}). Thus, we only need to analyze each network once, and then combine the results as per the aggregation method. 

\begin{figure}[h]
    \centering
    \includegraphics[width=0.48\textwidth]{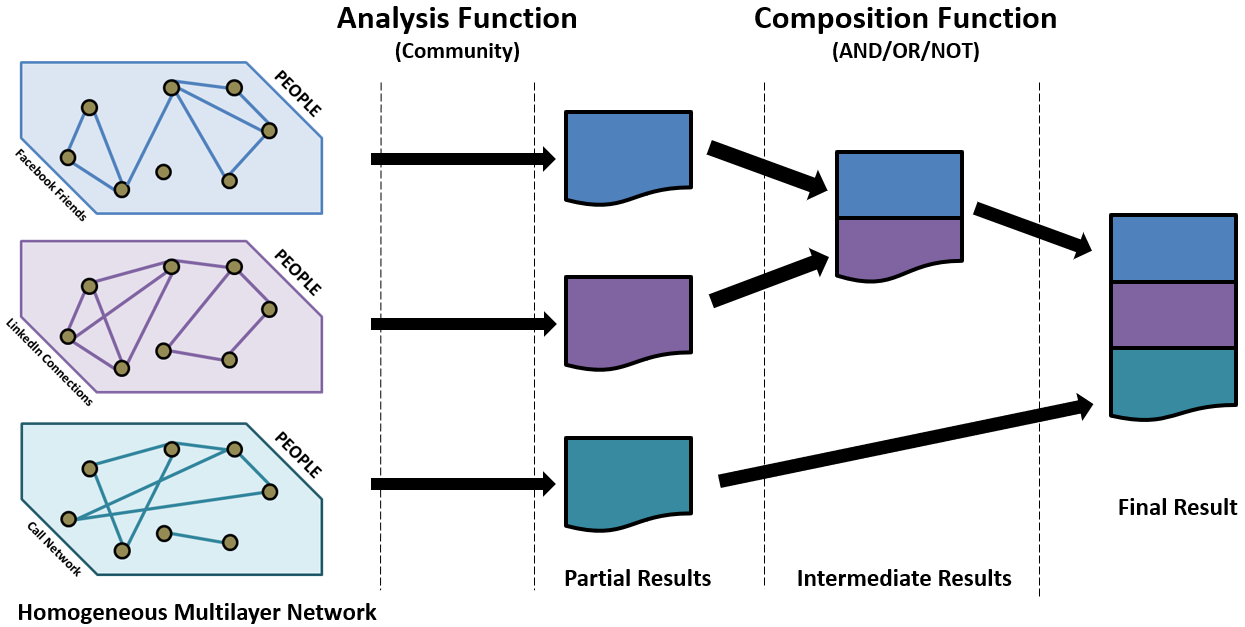}
    \caption{Illustration of network decomposition. Each  layer is analyzed separately and then the results are aggregated.}
    \label{fig:decoupling-HoMLN}
\end{figure}



Our goal is to apply network decomposition to efficiently find communities (i.e. strongly connected groups on nodes) in layers of multiplex that are combined using the Boolean AND and OR operations. For the AND (OR) operation, the combined  network will contain an edge, if there exists an edge in \textit{all} (\textit{any one}) of the individual layers. An AND-Composition represents how multiple features together affect an analysis. For example, in identifying  regions that become accident prone due to poor lighting conditions \textit{as well as} bad roads. An OR-Composition represents how any one of the features affects a property. For example, in finding the group of people who are friends via least one of the social networking platforms among Facebook, Linkedin and Twitter.




Using network decomposition, we generate communities for each of the individual layers and then combine the resultant communities based on the Boolean operators. As our results in Section~\ref{sec:experiments} demonstrate, our approach significantly reduces the computational costs  while providing results of high accuracy. 


{\em Problem Statement:} Given a set of layers $G_1, G_2, \ldots,  G_x$, that are combined using a Boolean operation $\bigoplus$ and a community detection algorithm $COMM$, develop an aggregation algorithm $\Theta$, such that $COMM(\bigoplus_{i=1}^{x}(G_i)) \approx \Theta_{i=1}^{x}(COMM(G_i)$).

Our goal is to find an aggregation algorithm $\Theta$, such that the results of finding the communities in the individual layers and then aggregating them via $\Theta$, should be the nearly the same as the  communities obtained from the composed network where the layers are combined using the Boolean operator $\bigoplus$~\footnote{We use "nearly equal" rather than "equal" because community detection is non-unique and changes to the order in which vertices are processed can slightly change the results.}.




The remainder of the paper is organized as follows. In Section~\ref{sec:multiplexIntroduction}, we provide a brief description of community detection in multiplexes. In Section~\ref{sec:contribution}, we present our contribution of community detection using network decomposition. In Section~\ref{sec:experiments} we present the experimental results. In Section~\ref{sec:relatedWork}, we discuss related research and conclude in Section~\ref{sec:conclusions} with a discussion of our future plans.
\section{Community Detection in Multiplexes}
\label{sec:multiplexIntroduction}
We describe how multiplex networks are created from datasets, and how communities are obtained in AND and OR composition of layers. 

\subsection{Creating Multiplex Networks}
In a multirelational dataset, entities are related through multiple features. For example, people can be connected through different social networks. Authors of academic papers can be connected based on the common journals where they publish.  

In a multiplex or multilayer network, each  feature is represented as a separate network or layer. The set of entities (or nodes)  remain the same in each layer. The edges, based on interactions between the entities change across the layers with respect to the corresponding feature. Table \ref{notations} lists the notations of  the terms of multilayer networks discussed in this paper.

\begin{table}[!t]
    \renewcommand{\arraystretch}{1.3}
    \caption{List of notations used for defining the concepts.}
    \label{notations}
    \centering
        \begin{tabular}{|c|c|}
            \hline
                $N_{L}$ & Number of layers \\
            \hline
                        \hline
                $I$ & Set of entities \\
            \hline
                        \hline
                $f$ & Set of features/layers \\
            \hline
                        \hline
$G(V_{k}, E_{k})$ or $G_k$ & The $k^{th}$ layer \\
            \hline
                        \hline
                $u_{k}^{i}$ & Represntative node for $i^{th}$ entity in the $k^{th}$ layer \\
			\hline
                        \hline
	$V_{k}$ & Set of nodes in the $k^{th}$ layer \\
            \hline
                        \hline
                $(u_{k}^{i}, u_{k}^{j})$ & An edge in the $k^{th}$ layer \\
                            \hline
            \hline
                $E_{k}$ & Set of edges in the $k^{th}$ layer \\
            \hline
                        \hline
$C(V^m_k, E^m_k)$or $C^m_k$ & The $m^{th}$ community in the $k^{th}$ layer \\
            \hline
             \hline
	$V^m_{k}$ & Set of nodes in $C^m_k$ \\
            \hline            \hline
	$E^m_{k}$ & Set of edges in $C^m_k$ \\
            \hline

        \end{tabular}
\end{table}

\subsubsection{Modeling the IMDb Dataset as a Multiplex}  We use the Internet Movie Database (IMDb) to illustrate how a multiplex is constructed.
The IMDb is an online database  that contains information on television programs and movies including actors, directors, genre, and year of release \cite{data/type2/IMDb}. 

We create a multiplex where the entities represent actors and two actors are connected to each other if they have acted in the same movie. Each layer in the multiplex represents a movie genre, such as comedy, drama, action, etc. 
As per the notations in Table \ref{notations}, the IMDb multiplex is defined follows:
\begin{itemize}
\item $I$: The set of actors form the entities,  $I$ = $\{I_1, I_2, ...\}$.
\item $f$: The set of features corresponds to the genres, $f$ = $\{f^1, f^2, ...\}$. The number of layers, $N_L$, is equal to $|f|$.
\item $G(V_k, E_k)$ or $G_k$: The $k^{th}$ layer represents the relations among the set of actors, $I$, with respect to the genre $f^{k}$.
\item $u_k^i$ $\in$ $V_k$: In the $k^{th}$ layer, $G_k$, the $i^{th}$ actor, $I_i$, is represented by a vertex, $u_k^i$. 
\item $(u_k^i, u_k^j)$ $\in$ $E_k$: If the $i^{th}$ and the $j^{th}$ actors have worked together in at least one movie that belongs to the $k^{th}$ genre, $f^k$, $G_k$, contains the edge $(u_k^i, u_k^j)$.
\end{itemize}

\begin{figure}[h]
    \centering
    \includegraphics[width=0.48\textwidth]{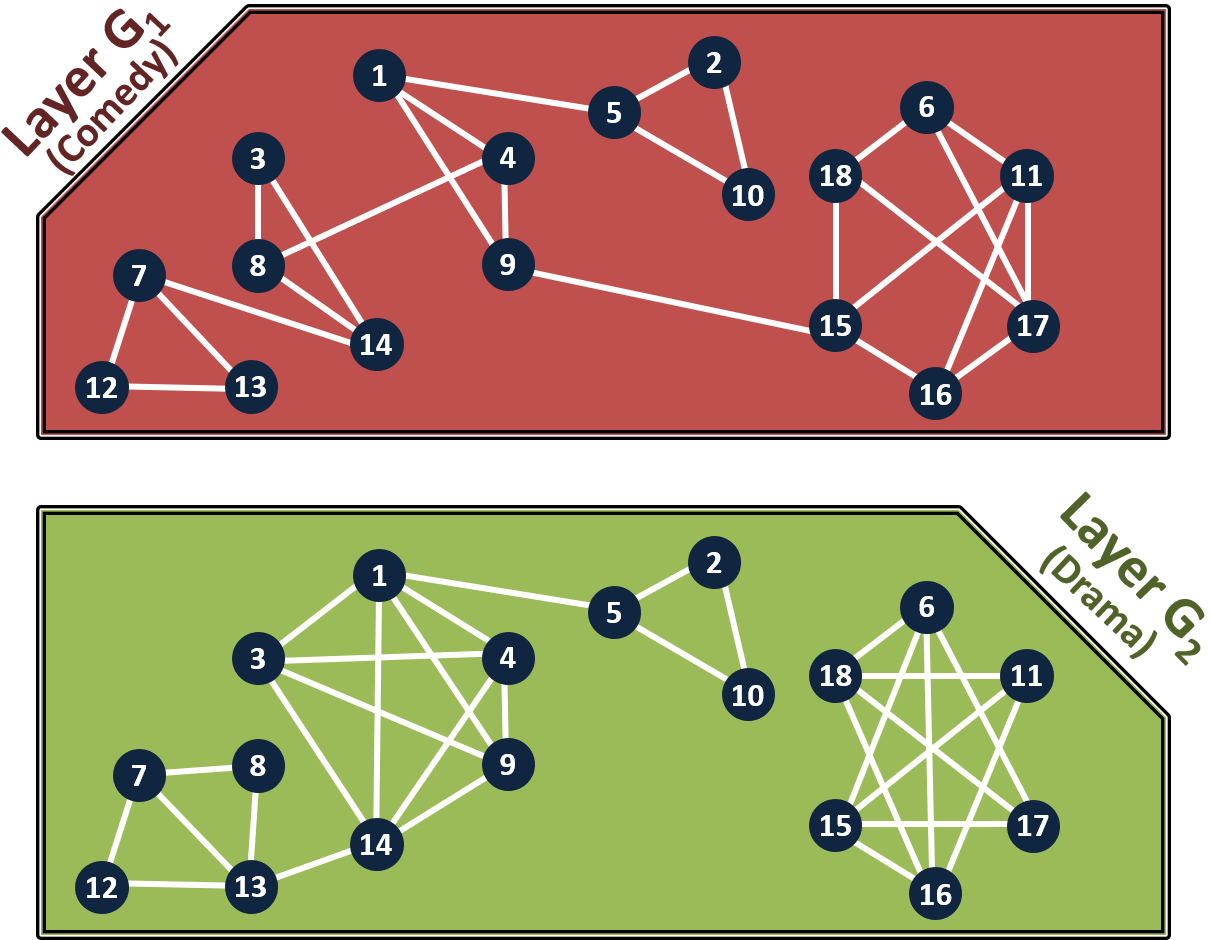}
    \caption{Example of the IMDb Multiplex for co-actors with 16 actors and two genres, comedy and drama.}
    \label{fig:genreMultiplex}
\end{figure}

In Figure \ref{fig:genreMultiplex} we have selected two genres, comedy ($f^1$) and drama ($f^2$) to form the two layers, $G_1$ and $G_2$, respectively. This multiplex shows the co-actor relationship among 16 actors (denoted by nodes numbered from 1 to 16) with respect to these  genres. The same 16 actors are present in both layers. Note that each co-actor network has a distinct structure. By taking the information from the two networks together we can gain interesting insights to the data, as follows.

For example, actors $I_3$ and $I_8$ have never worked together in a drama, but have worked together in a comedy. Thus this pair of actors may together be more likely to be considered in a comedy, rather than a drama.  Also observe
that the actor $I_{14}$ is the actor with most connections in the drama genre, while in the case of comedies, actor $I_{11}$ is one of the nodes with the most connections, i.e. worked with most number of actors. 



\subsection{Community Detection in Multiplexes} Community detection involves finding items with similar properties by identifying tightly connected groups of vertices. We consider non-overlapping communities, that is, there are no common vertices or edges between  two communities. Figure \ref{fig:IMDbComposition} shows the communities for the composed layers, $G_{1AND2}$ and $G_{1OR2}$ for IMDB multiplex in Figure \ref{fig:genreMultiplex}. 

{\em Bridge Edges.} We term the external edges that connect two communities as {\em bridge edges}. Formally, if there exists an edge, $(u_k^i, u_k^j)$, such that $u_k^i$ $\in$ $C_k^m$ and $u_k^j$ $\in$ $C_k^n$, where $m \neq n$, then  this edge is a bridge edge. Bridge edges form links between two distinct communities. In the AND and OR composed-layer of Figure \ref{fig:IMDbComposition}, the actors $I_1$ and $I_5$ belong to different communities, but are connected by a bridge edge. The actors $I_9$ and $I_15$ have a bridge edge in the OR composed network, but not in the AND-composed network.

\subsubsection{Communities in AND-Composed Layers.} AND composition of layers in a multiplex allows users to find communities that are related across multiple features. Examples of some queries that can be addressed by the AND composition are;

\begin{itemize}
\item Groups of actors who have expertise in working together in {\em both} comedies \textbf{and} dramas (IMDb multiplex).
\item Author groups who publish in all of {\em these} conferences; ICDM, SIGMOD \textbf{and} VLDB (DBLP multiplex).
\item Groups of accidents that have similar conditions for {\em all  these} features;  light conditions, weather conditions, road conditions, and speed limit (Accident Multiplex).
\end{itemize}
The standard practice is to combine the layers using the AND operation, i.e. only edges that occur in all the layers are included. Then a community detection algorithm, such as Infomap, is executed on the combined network.  This single graph approach, termed C-SG-AND, is given in Algorithm~\ref{naive:AND}.


\begin{algorithm}[H]

        \begin{algorithmic}[1]
				\label{algo:NAIVE-AND}
				\REQUIRE Layers $G_1, G_2, \ldots G_x$
				\ENSURE return $L_{1,2,\ldots,x}^{AND}$ - a list of communities
				\STATE \quad $G_{1 AND 2 \ldots AND x}$ $\leftarrow$ $\{G_1$ AND $G_2$ \ldots AND $G_x$\}
				
				\COMMENT{ $G_{1 AND 2 \ldots AND x}$ contains edges that are in all the networks $G_1$, $G_2$, \ldots, $G_j$.}
				\STATE \quad $L_{1,2,\ldots,x}^{AND}$ = COMM($G_{1 AND 2 \ldots AND x}$)
				
				\COMMENT{Find communities in $G_{1 AND 2 \ldots AND x}$.}
			\end{algorithmic}
        \caption{Algorithm for C-SG-AND}
        \label{naive:AND}
		\end{algorithm}
\subsubsection {Communities in OR-Composed Layers} OR-composition forms a composed network that includes an edge if it appears in any of the layers. Algorithm~\ref{naive:OR} shows the steps of this single network based community detection using the OR operation, termed as C-SG-OR. Examples of queries that can be addressed by the OR composition are;
\begin{itemize}
\item Groups of actors who have acted together in either a comedy \textbf{or} drama (IMDb multiplex).
\item  Groups of authors who have published in \textbf{at least one} of these conferences, ICDM, VLDB, SIGMOD (DBLP multiplex).
\item Groups of accidents that have at \textbf{least one} condition in common (Accident Multiplex).
\end{itemize}

\begin{algorithm}[H]

        \begin{algorithmic}[1]
				\REQUIRE Layers $G_1, G_2, \ldots G_x$
				\ENSURE return $L_{1,2,\ldots,x}^{OR}$ - a list of communities
				\STATE \quad $G_{1 OR 2 \ldots OR x}$ $\leftarrow$ $\{G_1$ OR $G_2$ \ldots OR $G_x$\}
				
				\COMMENT{ $G_{1 OR 2 \ldots OR x}$ contains edges that are in \textit{at least one} of the networks $G_1$, $G_2$, \ldots, $G_x$.}
				\STATE \quad $L_{1,2,\ldots,x}^{OR}$ = COMM($G_{1 OR 2 \ldots OR x}$)
				
				\COMMENT{Find communities in $G_{1 OR 2 \ldots OR x}$.}
			\end{algorithmic}
        \caption{Algorithm for C-SG-OR}
        \label{naive:OR}
		\end{algorithm}
		
\begin{figure}[h]
    \centering
  \includegraphics[width=0.48\textwidth]{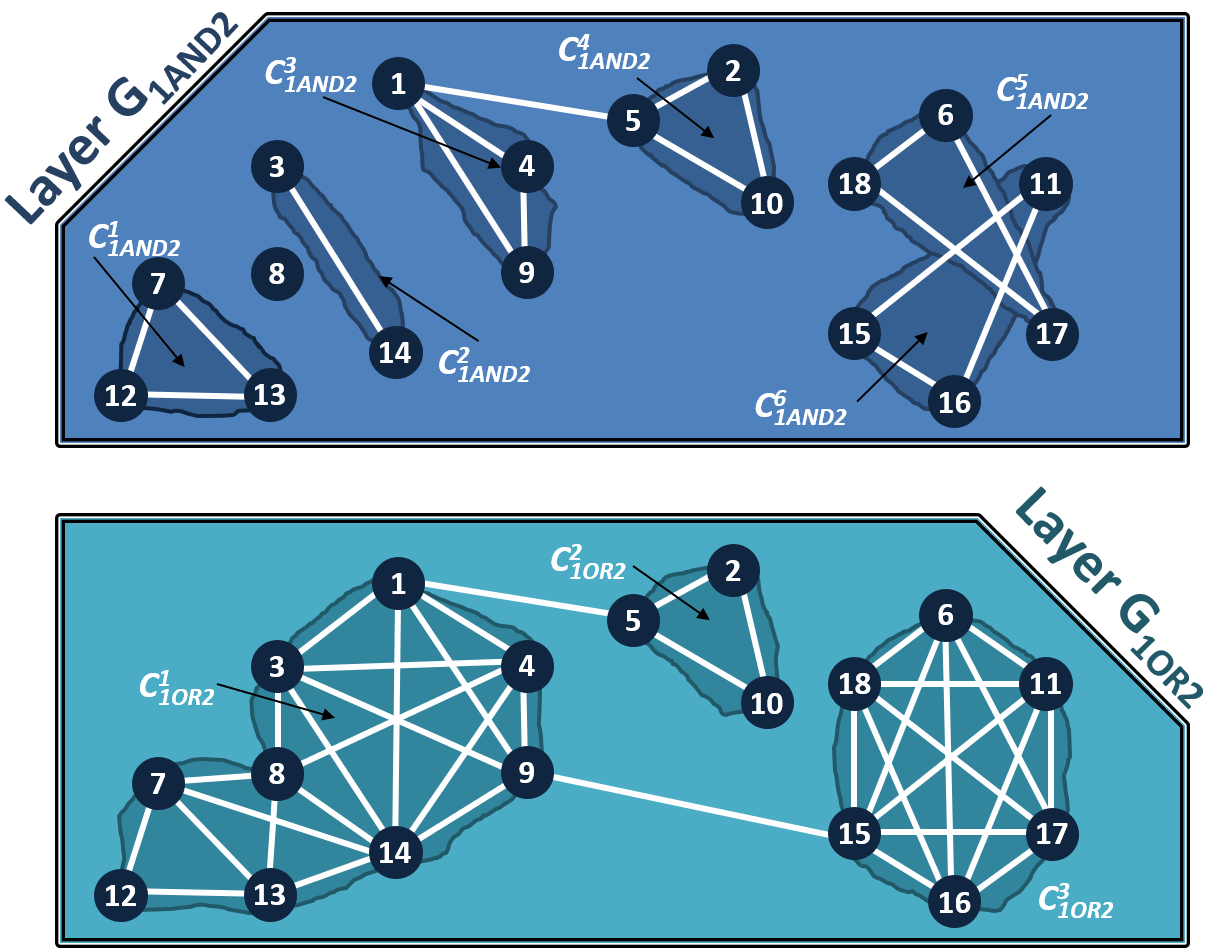}
    \caption{Composed Layer Communities of the IMDb Multiplex shown in Figure \ref{fig:genreMultiplex}}
    \label{fig:IMDbComposition}
\end{figure}

\begin{figure}[h]
    \centering
  \includegraphics[width=0.48\textwidth]{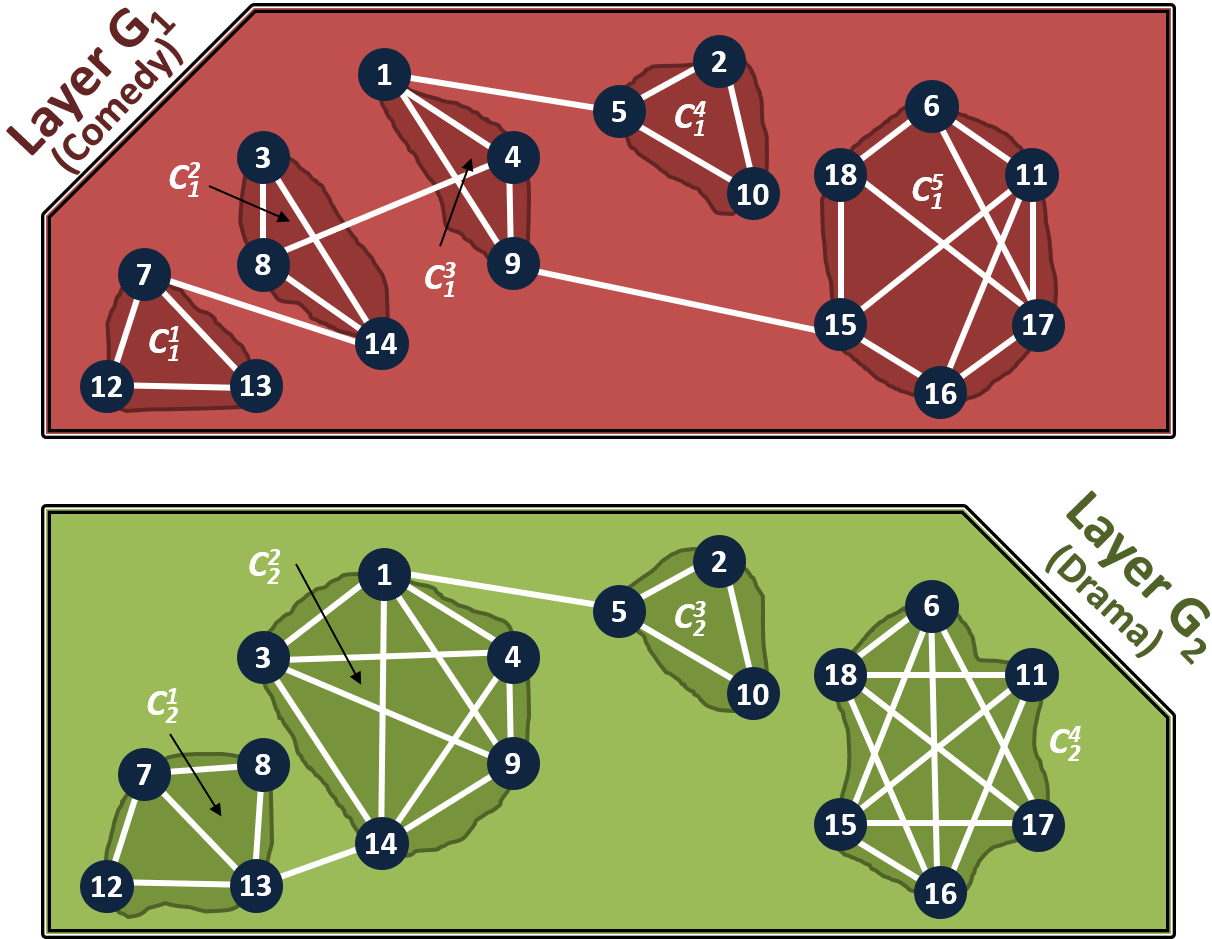}
   \caption{Communities in each layer of the IMDb Multiplex}
    \label{fig:IMDbCommunities}
\end{figure}

\begin{algorithm} [tbh]
        \begin{algorithmic}[1]
				\label{algo:CV-AND}
				\REQUIRE Communities from layers $G_i$ and $G_j$: \\ COMM($G_i$) = \{$C_i^1(V_i^1, E_i^1)$, $C_i^2(V_i^2, E_i^2)$, \ldots, $C_i^x(V_i^x, E_i^x)$\}, \\
				COMM($G_j$) = \{$C_j^1(V_j^1, E_j^1)$, $C_j^2(V_j^2, E_j^2)$, ..., $C_j^y(V_j^y, E_j^y)$\}
				\ENSURE return $L_{i,j}^{CV-AND}$ - a list of communities
				
				\STATE \quad $L_{i,j}^{CV-AND}=\Phi$ 
				
				\COMMENT {Initialize the set of communities to NULL.}
				\FOR {each community pair say, $C_i^p$ and $C_j^q$}
\STATE \quad $C_{i,j}^{p,q}$ =($V_i^p$ $\cap$ $V_j^q$)

\COMMENT {Create new combined community by taking the common {\bf vertices} of every pair of communities.}
				\STATE \quad $L_{i,j}^{CV-AND}$ = $L_{i,j}^{CV-AND}$ $\cup$ $C_{i,j}^{p,q}$
				
				\COMMENT {Add new community to the set of communities.}
				
				\ENDFOR
			\end{algorithmic}
        \caption{Algorithm for CV-AND}
        \label{CV_AND}
		\end{algorithm}

\section{Our Contribution: Network Decomposition for Efficient Community Detection on Multiplex}
\label{sec:contribution}
The Boolean composition of the layers of a multiplex provides in-depth analysis of the database. However, for any single Boolean operation, say AND, $2^N-1$ different combinations are possible. Thus the cost of  finding the communities on each of them separately is very expensive. Moreover, if the networks do not change, several computations are rendered redundant. For example, consider finding the communities in the composed layer $G_{1AND2AND3}$ and $G_{1AND2AND4}$.  In this case, the composed layer related to $G_{1AND2}$ remains unchanged, but has to be recomputed. 

As a solution, we propose network decomposition for efficient community detection on multiplex networks. In network decomposition, the communities in each layer are identified separately and the results are then aggregated to obtain the results with respect to the composed network. Note that the storage required is only of the order of $O(V*f)$, where $V$ is the number of vertices in each layer and $f$ is the number of features/layers. Figure~\ref{fig:IMDbCommunities} shows the  communities in each of the layers of the example IMDB network.

The {\bf challenge} is to develop aggregation algorithms, $\Theta$, that can correctly aggregate the communities from each of the layers to obtain the communities over the composed network. We present the aggregation methods for AND and OR composition. For ease of understanding we will discuss the algorithms with respect to two layers. Note, however, that any binary operations can be easily extended to multiple layers.

\subsection{Vertex based Community Detection of AND Composed Layers (CV-AND)} An earlier paper on obtaining communities in AND-composed layers was presented in~\cite{ICCS/SantraBC17}. We discuss this work, termed CV-AND, here for completeness. 

CV-AND (see Algorithm~\ref{CV_AND}) heavily depends on the presence of {\em self-preserving communities}. A community is self preserving if the vertices in it are so tightly connected such that any connected subset of the vertices will form a smaller community rather than joining an existing larger community. 
Formally, consider a graph $G_i$, with a community whose vertices are given by the set $C_v$. Now consider a subset of vertices $C^S_v \in C_v$. If the vertices in $C^S_v$ form a community by themselves, for {\em any subset} $C^S_V$ of $C_v$, where $\|C^S_v\| \ge 3$, then community $C_v$ is self preserving. The main result of ~\cite{ICCS/SantraBC17} was that if the communities from the layers are self preserving, then the communities of the AND-composed graph can be obtained by taking the vertex based intersection of the communities from the individual layers. 

{\em Drawbacks} The main drawback of CV-AND is that for most networks, there is no guarantee that the communities will be self-preserving. If this algorithm is applied without testing for self-preserving communities, the results may not be accurate. 

As an example, consider the community $C_1^5$ in the comedy layer of the network (Figure~\ref{fig:IMDbCommunities}). This community is not self preserving, and when combined with community $C_2^4$ in the drama layer, which has the same vertices, it gives one large community, \{ $I_6$, $I_{11}$, $I_{15}$,$I_{16}$,$I_{17}$,$I_{18}$\}. In reality, as seen in Figures~\ref{fig:IMDbComposition}, two separate communities are formed, \{ $I_6$,$I_{17}$,$I_{18}$\} and \{ $I_{11}$, $I_{15}$,$I_{16}$\}.

This is a subtle but important difference because the community id determines whether two entities are similar. If two disconnected groups of vertices are placed in the same community (as is possible when using CV-AND), then, two dissimilar groups are marked to be similar, which is incorrect.

\begin{algorithm}[tbh]
        \begin{algorithmic}[1]
				\REQUIRE Communities from layers $G_i$ and $G_j$: \\ COMM($G_i$) = \{$C_i^1(V_i^1, E_i^1)$, $C_i^2(V_i^2, E_i^2)$, ..., $C_i^x(V_i^x, E_i^x)$\}, \\
				COMM($G_j$) = \{$C_j^1(V_j^1, E_j^1)$, $C_j^2(V_j^2, E_j^2)$, ..., $C_j^y(V_j^y, E_j^y)$\}
				\ENSURE return $L_{i,j}^{CE-AND}$ - a list of communities

				\STATE \quad $L_{i,j}^{CE-AND}=\Phi$ 
				
				\COMMENT {Initialize the set of communities to NULL.}
				\FOR {each community pair say, $C_i^p$ and $C_j^q$}
				\STATE \quad \{$C_{i,j}^{p,q}$\} = ($E_i^p$ $\cap$ $E_j^q$)

\COMMENT{Create \textit{list} of k new communities by taking the common {\bf edges} of every pair of communities.}
				
				\STATE \quad $L_{i,j}^{CE-AND}$ = $L_{i,j}^{CE-AND}$ $ \cup \{C_{i,j}^{p,q}$\}
				
					\COMMENT{Add new communities to the set of communities.}
				\ENDFOR
			\end{algorithmic}
        \caption{Algorithm for CE-AND}
        \label{algo:CE-AND}
		\end{algorithm}

\subsection{Edge based Community Detection of AND Composed Layers (CE-AND)} We address these limitations by developing a community detection method, CE-AND (see Algorithm ~\ref{algo:CE-AND}), that is based on the intersection of {\em edges} rather than {\em vertices} as follows. 

For every pair of communities, $C_i^m (V_i^m, E_i^m)$ from layer $G_i$ and $C_j^n (V_j^n, E_j^n)$ from layer $G_j$,
the edge-based community intersection, $E_i^m \cap E_j^n$, will produce k disconnected edge-sets, $E_{iANDj}^1, E_{iANDj}^2, ..., E_{iANDj}^k$. These edge sets will form the AND-composed communities, $C_{iANDj}^1, C_{iANDj}^2, ..., C_{iANDj}^k$. 

{\em Correctness and Limitations.} Figure~\ref{fig:IMDBCE-AND} shows how the communities are obtained for the example network using CE-AND. Comparing this result to that in Figure~\ref{fig:IMDbComposition}, we see that most of the communities are obtained with the exception of the singleton node 8 and the common bride edge (1, 5).

\begin{lemma}
\label{lemma:1}
Let $V_{i,j}$ be a set of vertices that are assigned to the same community in all the layers of the multiplex.  Let the set of common edges in all the layers whose both endpoints are in $V_{i,j}$, be $E_{i,j}$. If $|E_{i,j}| \ge |V_{i,j}|-1$, then the community composed of vertices $V_{i,j}$ will be detected by CE-AND.
\end{lemma}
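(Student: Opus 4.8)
The plan is to reduce the statement to a connectivity claim about the common-edge subgraph. By construction of CE-AND (Algorithm~\ref{algo:CE-AND}), the output communities are precisely the connected components of the intersected edge sets $E_i^p \cap E_j^q$. Since the vertices $V_{i,j}$ all lie in a single community $C_i^p$ in layer $G_i$ and a single community $C_j^q$ in layer $G_j$, every common edge with both endpoints in $V_{i,j}$ --- that is, every edge of $E_{i,j}$ --- appears in the intersection $E_i^p \cap E_j^q$. Hence CE-AND will report $V_{i,j}$ as one community exactly when the edges of $E_{i,j}$ connect all of $V_{i,j}$ into a single connected component. So the whole statement comes down to showing that, under the hypothesis $|E_{i,j}| \ge |V_{i,j}| - 1$, the graph $(V_{i,j}, E_{i,j})$ is connected and spans $V_{i,j}$.

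First I would record the elementary counting fact that motivates the bound: a spanning tree of $V_{i,j}$ uses exactly $|V_{i,j}| - 1$ edges, and any connected graph on $V_{i,j}$ needs at least that many. Thus $|E_{i,j}| \ge |V_{i,j}| - 1$ is precisely the number of edges required to hold $V_{i,j}$ together, and the aim is to show the common edges actually realise this. I would then exploit the hypothesis that $V_{i,j}$ is one community in each layer: within each layer the structure on $V_{i,j}$ is cohesive, so every vertex of $V_{i,j}$ is incident to an edge in each layer, and I would argue each such vertex is incident to at least one edge of $E_{i,j}$. Combining ``no isolated vertex in $(V_{i,j}, E_{i,j})$'' with the edge count $|E_{i,j}| \ge |V_{i,j}| - 1$, I would conclude that the common edges form a single spanning component, so CE-AND merges all of $V_{i,j}$ into one detected community.

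The hard part will be the connectivity step, because the edge count alone is necessary but not sufficient: a subgraph on $|V_{i,j}|$ vertices can carry $|V_{i,j}| - 1$ edges while leaving a vertex stranded (for instance a dense clique on a proper subset together with a nearly isolated vertex, or two cohesive pieces whose only cross-links differ between layers and hence vanish in the intersection). Ruling this out is the real content of the lemma, and it is where the hypothesis ``same community in every layer'' must do the work --- I would lean on the fact that the common-edge subgraph inherits enough of the per-layer cohesion that the $|V_{i,j}| - 1$ edges cannot all concentrate on a proper subset. If that cohesion argument does not close the gap cleanly, the honest fallback is to read the hypothesis as implicitly requiring $(V_{i,j}, E_{i,j})$ to be connected, in which case the edge bound becomes the consistency check that such a spanning common subgraph can exist, and the detection conclusion then follows immediately from the connected-component description of CE-AND.
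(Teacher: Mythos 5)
Your reduction is the right one and matches the paper's: CE-AND outputs the connected components of $E_i^p \cap E_j^q$, so the lemma comes down to whether the common-edge subgraph $(V_{i,j}, E_{i,j})$ is connected and spans $V_{i,j}$. Where you differ from the paper is in honesty about the connectivity step. The paper's entire proof is the assertion that ``if a subgraph has $n$ vertices and $n-1$ edges, then the subgraph is connected,'' and this assertion is simply false: a triangle together with an isolated vertex has $4$ vertices and $3$ edges but two components. You correctly identify this as the real content of the lemma and refuse to paper over it. Your intermediate attempt does not close it either, though --- ``no isolated vertex'' plus $|E_{i,j}| \ge |V_{i,j}|-1$ still does not force connectivity (a triangle on three vertices disjoint from a path on three vertices has $6$ vertices, $5$ edges, no isolated vertex, and two components) --- and the appeal to per-layer cohesion cannot rescue it, because the intersection of two cohesive edge sets can concentrate all its edges on a proper subset of $V_{i,j}$.

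So the gap you flag is genuine, but it is a gap in the lemma and in the paper's proof, not merely in your write-up. The correct statement of the counting fact runs in the other direction: $|E_{i,j}| \ge |V_{i,j}|-1$ is a \emph{necessary} condition for $(V_{i,j}, E_{i,j})$ to be connected, i.e., it is a lower bound on the common-edge density below which detection is impossible --- which is in fact how the paper uses the lemma in the sentence following it. As a \emph{sufficient} condition for detection it does not hold. Your fallback reading (take connectivity of the common-edge subgraph as the real hypothesis, with the edge count as its numerical consequence) is the only way to make the conclusion follow, and it is what the paper's proof implicitly assumes when it says ``since this subgraph is a community and is present in all the layers.'' Your proposal is therefore more rigorous than the paper's own argument; the one thing to fix in your text is the middle paragraph's claim that the isolated-vertex argument yields a single spanning component, which your own final paragraph already contradicts.
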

{\em Proof.} The proof  is based on the observation that if a subgraph has $n$ vertices and $n-1$ edges, then the subgraph is connected. Since this subgraph is a community and is present in all the layers, therefore the community will be detected.
Lemma~\ref{lemma:1} provides a lower bound ($n-1$) on the density of common edges of the communities that is needed for them to be detected by the CE-AND. 

\begin{lemma}
The CE-AND method returns all the communities of size $\ge 2$ that were returned by the CV-AND method. 
\end{lemma}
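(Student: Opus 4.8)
The plan is to reduce the claim to a per-pair analysis, since both CV-AND and CE-AND iterate over exactly the same set of community pairs $(C_i^p, C_j^q)$. For a fixed pair, CV-AND emits the single vertex set $V_i^p \cap V_j^q$, whereas CE-AND emits the connected components of the common-edge set $E_i^p \cap E_j^q$ (viewed as vertex sets). So I would split the argument into two parts: first a \emph{containment} step showing that CE-AND never merges vertices across distinct CV-AND communities, and then a \emph{recovery} step showing that every CV-AND community of size $\ge 2$ is actually detected by CE-AND.

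The containment step should be the routine one. Any edge in $E_i^p \cap E_j^q$ has both endpoints in $V_i^p$ and in $V_j^q$, hence in $V_i^p \cap V_j^q$; therefore each CE-AND component is, vertex-wise, a subset of the corresponding CV-AND set, and no CE-AND community can straddle two different CV-AND communities. For the recovery step I would invoke Lemma~\ref{lemma:1} directly: given a CV-AND community $S = V_i^p \cap V_j^q$ with $|S| \ge 2$, the edges common to both layers whose endpoints lie in $S$ are exactly the set $E_{i,j}$ appearing in Lemma~\ref{lemma:1}. If $S$ is a genuine community, its vertices are connected through these common edges, so $|E_{i,j}| \ge |S| - 1$, and Lemma~\ref{lemma:1} then guarantees that CE-AND returns it. Thus the whole statement would follow by ranging over all pairs.

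The hard part is reconciling the word ``returns'' with the splitting behaviour illustrated just before the lemma: CV-AND can output a vertex set that is \emph{disconnected} in the common-edge subgraph (the six-vertex set $\{I_6,I_{11},I_{15},I_{16},I_{17},I_{18}\}$), in which case $|E_{i,j}| < |S| - 1$, Lemma~\ref{lemma:1} does not apply verbatim, and CE-AND correctly breaks $S$ into smaller communities. So I would not try to prove that CE-AND reproduces each CV-AND vertex set unchanged; instead I would argue that the vertex set of every size-$\ge 2$ CV-AND community is \emph{covered} by the union of the CE-AND communities it refines into, so that no non-singleton vertex is lost. The delicate point, and the step I expect to be the main obstacle, is showing that the only vertices CE-AND can drop relative to CV-AND are those isolated in the common-edge subgraph: such vertices carry no common internal edge and therefore form size-$1$ pieces, which is precisely what the ``$\ge 2$'' hypothesis excludes (as with the singleton node $8$). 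Pinning down that every retained vertex of a size-$\ge 2$ community carries at least one common internal edge is where the connectivity assumption on the underlying communities must be made explicit.
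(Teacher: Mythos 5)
Your core argument follows the same route as the paper's proof: reduce the claim to Lemma~\ref{lemma:1} by showing that a community returned by CV-AND has at least $|V_{i,j}|-1$ common internal edges. Where you diverge is in how the ``hard part'' you identify is discharged. The paper's proof is two sentences: it invokes the \emph{self-preservation} property under which CV-AND is justified in the first place, asserts that self-preserving communities are densely connected, and concludes that the lower bound of Lemma~\ref{lemma:1} is met. The disconnected-in-the-common-edge-subgraph scenario that occupies most of your effort is thereby ruled out of scope: it is exactly the failure mode of CV-AND on non-self-preserving input (the $C_1^5$/$C_2^4$ example you cite), and the lemma is read as applying to the communities CV-AND returns under its own correctness hypothesis. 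Your containment step (each CE-AND component is vertex-wise contained in the corresponding CV-AND intersection) is sound but neither present in nor needed by the paper's argument.

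Two concrete gaps in your alternative route. First, the ``coverage'' reformulation is a genuinely weaker statement than the lemma and still leaks: a size-$\ge 2$ CV-AND community can contain a vertex that is isolated in the common-edge subgraph, and CE-AND drops that vertex; the $\ge 2$ hypothesis bounds the size of the CV-AND community, not the size of the pieces it refines into, so it does not exclude this case. Second, your recovery step rests on the inference ``if $S$ is a genuine community, its vertices are connected through these common edges,'' but that is the entire nontrivial content of the claim and cannot be left as an appeal to genuineness: being a community in each layer separately does not by itself force the \emph{common} edge set to be connected, or even nonempty. The missing ingredient in both places is the explicit density hypothesis the paper supplies via self-preservation.
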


{\em Proof.}
The proof is derived from Lemma~\ref{lemma:1}. All self-preserving communities have to  be densely connected, and thus satisfying the lower bound. Thus CE-AND will detect all the common self preserving communities.
\begin{lemma}
\label{lemma:3}
Common edges that form a bridge edge in at least one layer will not be detected by CE-AND.
\end{lemma}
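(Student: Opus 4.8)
The plan is to argue directly from the definitions of a bridge edge and of the edge sets on which CE-AND operates. First I would recall that CE-AND constructs each composed community solely from an intersection $E_i^p \cap E_j^q$, where $E_i^p$ and $E_j^q$ are the edge sets of the communities $C_i^p$ and $C_j^q$ detected inside the individual layers $G_i$ and $G_j$ (Algorithm~\ref{algo:CE-AND}, line 2). By the definition of a community's edge set, $E_k^m$ contains only those edges both of whose endpoints lie in $C_k^m$; edges leaving a community are excluded from $E_k^m$ by construction.

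Next I would invoke the definition of a bridge edge in a layer: an edge $(u_k^i, u_k^j)$ with $u_k^i \in C_k^m$ and $u_k^j \in C_k^n$ for distinct communities $m \neq n$. The heart of the argument is the observation that such an edge cannot belong to the edge set of any single community in layer $k$, precisely because its two endpoints are split across two different communities. Hence, if a common edge $e$ is a bridge edge in at least one layer, say layer $k$, then $e \notin E_k^m$ for every community $C_k^m$ of $G_k$.

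To finish, I would note that for $e$ to be reported by CE-AND it must appear in some intersection $E_i^p \cap E_j^q$, which in particular requires $e$ to lie in a community edge set of \emph{every} participating layer, including layer $k$. Since $e$ is absent from all of $G_k$'s community edge sets, it is absent from every such intersection, and therefore never appears in any composed community returned by CE-AND. This confirms the claim and matches what we already observed for the common bridge edge $(1,5)$ in the running example of Figure~\ref{fig:IMDBCE-AND}.

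The step I expect to be the crux is pinning down that ``bridge edge'' and ``intra-community edge'' are complementary notions within a single layer, i.e. making precise that the edge sets $E_k^m$ fed to CE-AND consist strictly of intra-community edges, so that any inter-community (bridge) edge is excluded at the source. Once this is stated cleanly, the conclusion is immediate: no case analysis over the number of layers is needed, since a single offending layer already forces $e$ out of every intersection, and the ``at least one layer'' hypothesis is exactly what the argument uses.
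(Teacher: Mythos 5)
Your proof is correct and follows essentially the same route as the paper's: both rest on the observation that CE-AND only intersects intra-community edge sets, so an edge that is a bridge (endpoints in different communities) in even one layer is excluded from every intersection. Your write-up is simply a more detailed elaboration of the paper's one-line argument.
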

{\em Proof.}
This proof follows from Algorithm~\ref{algo:CE-AND} in that both endpoints of the common edge have to be in the same community to be considered.

Lemma~\ref{lemma:3} highlights the limitations of finding communities in cases where the common communities do not share a dense substructure. Note that these limitations only occur for communities that are not dense. Since communities are informally defined as tightly connected groups of vertices, these sparse subpgraphs in most cases may not be relevant communities at all. We illustrate the issues in Figure~\ref{fig:bridge-AND}.

The left-hand panels of Figure~\ref{fig:bridge-AND} shows two layers. The top right panel shows the communities obtained by the standard single network approach (C-SG-AND). The bottom right panel shows the communities obtained by  CE-AND.

\begin{figure}
    \centering
  \includegraphics[width=0.48\textwidth]{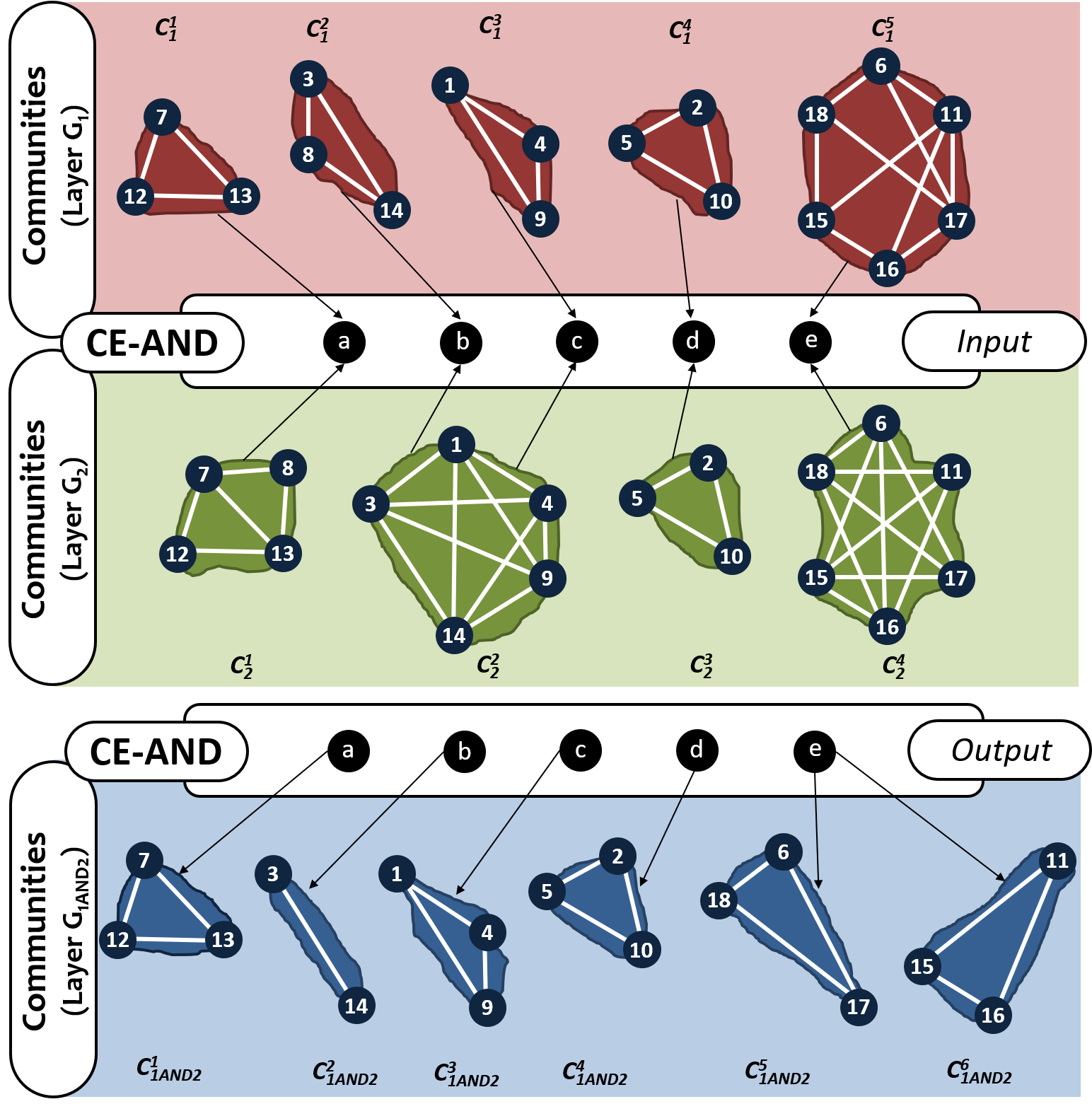}
    \caption{AND-Composition Communities of the IMDb Multiplex shown in Figure \ref{fig:genreMultiplex}, using CE-AND method}
    \label{fig:IMDBCE-AND}
\end{figure}
\begin{figure}
    \centering
  \includegraphics[width=0.45\textwidth]{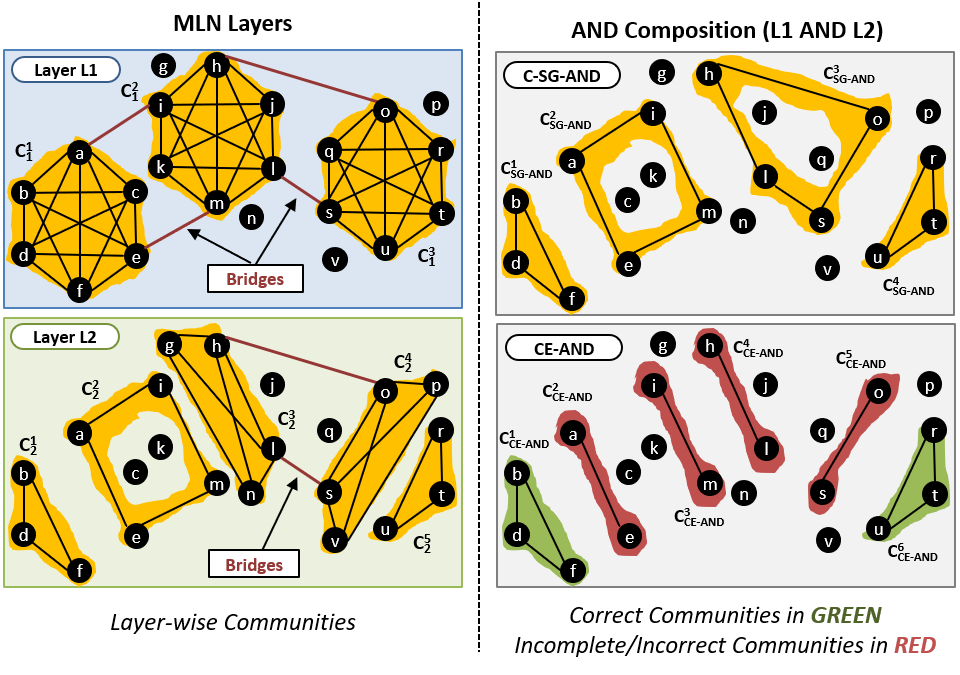}
    \caption{Effect of Bridge Edges on AND Composition}
    \label{fig:bridge-AND}
\end{figure}

Note that the community $C_{SG-AND}^3$ produced by C-SG-AND contains the edges (h, o) and (l, s) that act as bridges in both Layer L1 and L2. Thus CE-AND is not able to detect this community, and instead produces two communities, $C_{CE-AND}^4$ and $C_{CE-AND}^5$, which should be merged into one by taking the bridge edges into account.

Also consider the community $C_{SG-AND}^2$ which consists of the edges (a, i) and (e, m) that are bridges in Layer L1, but are part of the community $C_2^2$ in Layer L2. As only those edges that are within community \textit{ in all layers} are considered, CE-AND produces two communities, $C_{CE-AND}^2$ and $C_{CE-AND}^3$.

\subsection{Edge  based  Community  Detection  of  OR  Composed
Layers}

We now consider how to obtain communities in OR-composed networks. Note that the number of edges in the OR-composed network will at least as much, generally more, than the number of edges in each layer. Thus for any two layers $G_i$ and $G_j$, the total number of edges is $|E_i \cup E_j|$.

The computational complexity of community detection algorithms are at least proportional to the size of the graph. Thus the denser the graph, the more time will be required to find the communities. Thus for the OR-composed case, the goal is not only to lower the time by reducing the need to recompute different compositions of layers, but also to reduce the size of the graph to be analyzed.

To recreate the communities of OR Composed Layers, we propose the CE-OR algorithm (given in Algorithm~\ref{algo:CE-OR} and illustrated in Figure~\ref{fig:OR-example}). The CE-OR method reduces the size of the graph to be analyzed by leveraging the fact that the common communities across layers can be processed as a single node. The steps of the CE-OR algorithm are as follows;

\begin{algorithm}
        \begin{algorithmic}[1]
				\REQUIRE Communities from layers $G_i (V, E_i)$ and $G_j(V, E_j)$: \\ COMM($G_i$) = \{$C_i^1(V_i^1, E_i^1)$, $C_i^2(V_i^2, E_i^2)$, ..., $C_i^x(V_i^x, E_i^x)$\}, \\
				COMM($G_j$) = \{$C_j^1(V_j^1, E_j^1)$, $C_j^2(V_j^2, E_j^2)$, ..., $C_j^y(V_j^y, E_j^y)$\}
				\ENSURE return $L_{i,j}^{CE-OR}$ - a list of communities

\COMMENT{ Find common communities using CE-AND}\\
				\STATE Apply CE-AND on $COMM(G_i)$ and $COMM(G_j)$ to get $L_{i,j}^{CE-AND}$ 
				\\
				\textbf{Construct} \textit{OR-MG}$(V_{OR-MG}, E_{OR-MG})$
				
				\COMMENT{Assign nodes of each common community as a meta node}\\
				\FOR {each community $C_k (U_k, E_k)$ $\in$ $L_{i,j}^{CE-AND}$}
				\STATE $V_{OR-MG}$ = $V_{OR-MG}$ $\cup$ $U_k$
				\ENDFOR
				
					\COMMENT{ Assign the vertices not in any common community as a meta node}\\
				\FOR {each vertex $u \notin C_k$ $, \forall C_k \in 
				L_{i,j}^{CE-AND}$}
				\STATE $U_k=\phi$ \COMMENT {Create null set}
				\STATE {$U_k=U_k \cup u$} \COMMENT {Add $u$ to the set}
				\STATE $V_{OR-MG}$ = $V_{OR-MG}$ $\cup$ $U_k$
				\ENDFOR
				
				\COMMENT {Add Edges in the metagraph. Two metanodes, $(U,V)$ are connected if there is an intra-community edge from one constituent node of $U$ to a constituent node of $V$ in any one of the layers.}
				\FORALL { all metanode pairs $(U, V )\in V_{OR-MG}$}
                        \IF {$\exists$ $u,v,r$: $(u,v)$ $\in$ $E_i^r$ or $(u,v)$ $\in$ $E_j^r$,  $u$ $\in$ $U$ and $v$ $\in$ $V$}
                    \STATE $E_{OR-MG}$ = $E_{OR-MG} \cup (U,V)$
                        \ENDIF
				\ENDFOR				
                \STATE Insert weights on the edges of OR-MG 
                \STATE L = COMM(OR-MG)
                \STATE Expand the \textit{community representative nodes} in each community from L to get $L_{i,j}^{CE-OR}$
			\end{algorithmic}
        \caption{Algorithm for CE-OR}
		\label{algo:CE-OR}
		\end{algorithm}

{\em Overview of CE-OR.} Find the common communities in all the network layers (Line 1) by using  CE-AND.
Then construct a metagraph (OR-MG), as follows. Each metanode represents {\em a set} of vertices. Combine each of the vertices in a common community into a metanode (Line 2-4). Assign all remainder vertices, that are not part of any common community into singleton sets. Each of these sets is also a metanode (Line 5-9). Connect two metanodes, $(U,V)$ via a metaedge, if there exists an intra-community (within community)  edge, in {\em any one} of the layers between an element (node) of $U$ and an element (node) of $V$ (Line 10-14). Apply appropriate weights to these edges (Line 15). Apply community detection on the metagraph (Line 16). The communities in the OR-composed network are obtained by expanding the metanodes in the communities obtained by the CE-OR algorithm.

{\em Assigning Weights to Metaedges.}  Note that the metanodes represent vertex sets of varying sizes, and the number of edges between them represent the degree of similarity. Therefore although the original graph is composed on unweighted edges, the edges in the metagraph have to be weighted to quantify the extent of this similarity. A critical component of the CE-OR algorithm is based on correctly assigning  these weights. 

We propose two different weight metrics to quantity the similarity between two meta nodes. For any meta edge $(A,B)$, let $V_A$ and $V_B$ be the set of entity nodes in the AND-composed communities, respectively. Further, let the set of intra-community edges between $V_A$ and $V_B$ be $E_{A,B}$. Then the weight to the metaedge can be computed as follows;
\begin{itemize}
    \item {\em Aggregation:} The weight $w_a$ is the number of edges between the two communities; $w_a(A, B) = {|E_{A,B}|}$
    \item  {\em Fractional:} The weight $w_f$ is the fraction of connected nodes between the  two communities; $w_f(A, B) = \frac{|E_{A,B}|}{|V_A|*|V_B|}$. 
\end{itemize}
{\em Correctness and Limitations:} Figure \ref{fig:OR-example} illustrates how the CE-OR algorithm is applied to identify communities in the OR-composed layers of the example IMDb graph. First the CE-AND communities obtained in Figure \ref{fig:IMDBCE-AND} and the remaining vertex $I_8$ are used to form the metanodes (Figure \ref{fig:OR-example} (a)). Then these nodes are connected based on the intra-community edges. These edges are weighted in  the meta graph using $w_f$ (Figure \ref{fig:OR-example} (b)). A community detection algorithm on the metagraph produces the communities of the OR-composed layers (Figure \ref{fig:OR-example} (c)). Comparing with the communities obtained by the C-SG-OR method in Figure~\ref{fig:IMDbComposition}, to those obtained by expanding the communities in the metanodes (Figure \ref{fig:OR-example} (d)), we see that all the communities have been obtained. However, the bridge edges between the communities are missing.

\begin{figure}[h]
    \centering
  \includegraphics[width=\linewidth]{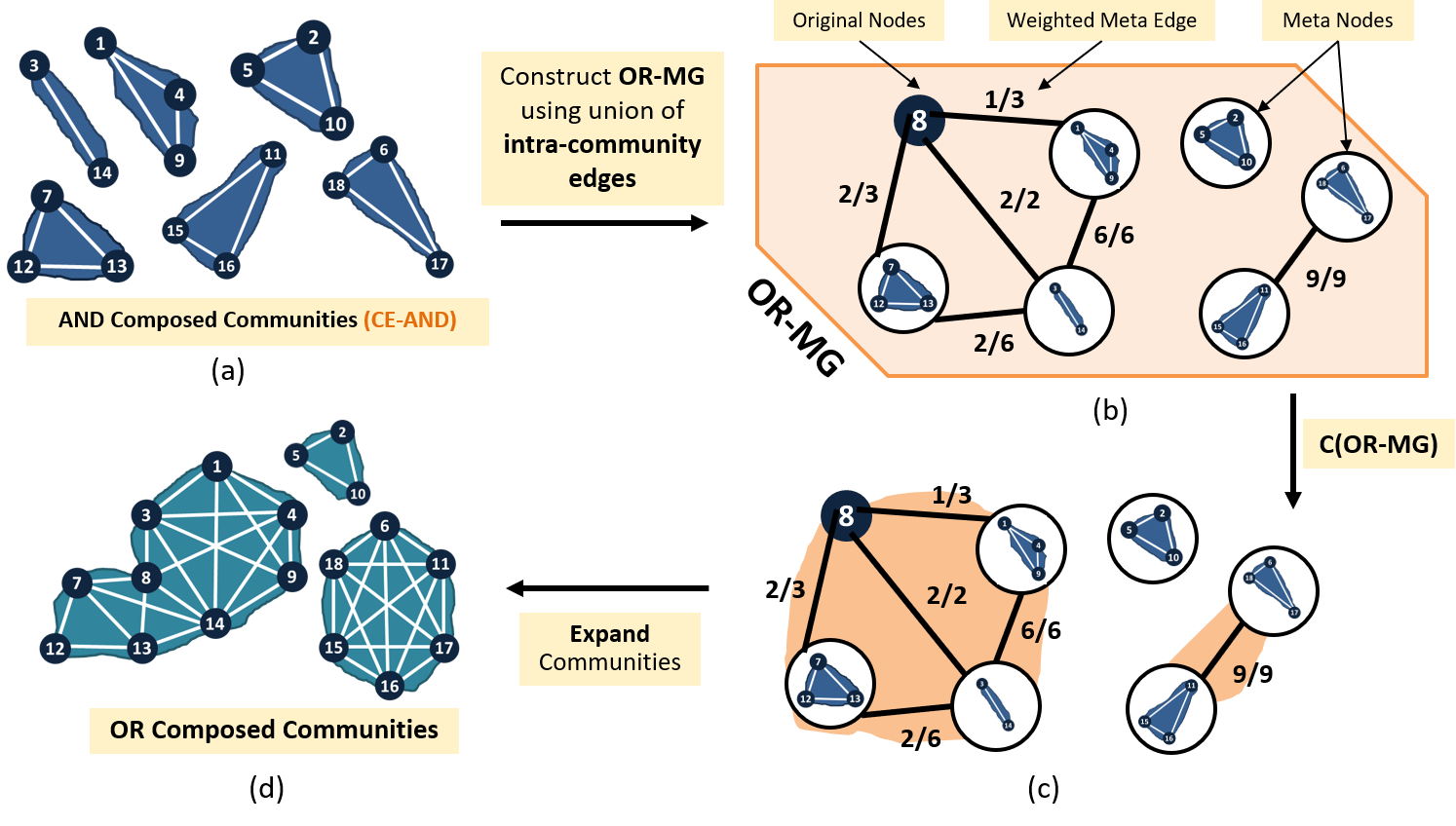}
    \caption{Illustration of CE-OR Algorithm on the IMDB Example Graph}
    \label{fig:OR-example}
\end{figure}

\begin{lemma}
Let $V_k$, $|V_k|=n$, be a set of vertices. Let $E_k$ be the set of edges whose both endpoints are in $V_k$, i.e. $\forall (x,y) \in E_k$, $x \in V_k$ and $y \in V_k$. If all $(x,y) \in E_k$, form an intra-community  edge in {\bf at least} one of the network layers and  the vertices in $V_k$ form a community in the OR-composed network, then this community will be detected by the CE-OR algorithm.
\end{lemma}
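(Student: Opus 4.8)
The plan is to show that contracting the common communities into metanodes preserves all the connectivity internal to $V_k$, so that the image of $V_k$ in OR-MG is a community that COMM recovers, and that expanding this community returns exactly $V_k$. I would organize the argument into three steps: (i) the metanodes tile $V_k$, (ii) every edge of $E_k$ survives the contraction, and (iii) the surviving structure is detected as a community and expands back to $V_k$.

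First I would establish that $V_k$ is a disjoint union of whole metanodes. Each vertex lies in exactly one metanode — either a common community produced by CE-AND or a singleton — so the metanodes partition the full vertex set. Singletons trivially lie wholly inside or outside $V_k$. For a common community $C$, note that $C$ is an intra-layer community in every layer, and since OR-composition only adds edges, $C$ remains a densely connected subgraph of the OR-composed network; hence $C$ is contained in a single OR-composed community. Because $V_k$ is one such community and communities partition the vertices, $C$ either lies entirely in $V_k$ or is disjoint from it. Thus no metanode straddles the boundary of $V_k$, and $V_k = \bigcup_{M \in S} M$, where $S$ is the set of metanodes meeting $V_k$.

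Next I would show that the induced subgraph $(V_k, E_k)$ maps onto a connected subgraph of the metagraph. Take any edge $(x,y) \in E_k$ with $x \in M_x$ and $y \in M_y$. If $M_x = M_y$ the edge is absorbed inside a metanode. Otherwise, by hypothesis $(x,y)$ is an intra-community edge in at least one layer, so the metaedge-construction rule (Lines 10--14 of Algorithm~\ref{algo:CE-OR}) places a metaedge between $M_x$ and $M_y$. This is precisely where the hypothesis is essential: without it, $(x,y)$ could be a bridge edge in every layer and would be dropped, exactly the failure mode of Lemma~\ref{lemma:3}. Since $V_k$ is a community it is connected, so its image $S$ is connected in the metagraph, and no intra-$V_k$ edge is lost in the contraction.

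Finally, I would argue that $S$ is detected as a community on OR-MG and expands to $V_k$. Because the contraction preserves all intra-$V_k$ connectivity while only coarsening it, and because the metaedge weights (Line 15) are chosen to compensate for the varying sizes of the metanodes, the metanodes in $S$ inherit the tight internal cohesion and relative external sparseness that made $V_k$ a community in the OR-composed network; hence COMM on OR-MG returns $S$, and expanding its metanodes yields $\bigcup_{M \in S} M = V_k$. The main obstacle, which appears in both step one (the boundary-respecting claim) and this step (the detection claim), is that ``community'' is an algorithm-dependent and non-unique notion, so transferring it from the OR-composed graph to the contracted metagraph is not purely combinatorial. I would lean on the edge-preservation of step two to argue that COMM cannot perceive any internal cut of $S$ that it did not already perceive as an internal cut of $V_k$, so it has no more reason to split $S$ than it had to split $V_k$ in the original graph.
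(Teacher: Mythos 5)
Your proposal is correct and takes essentially the same route as the paper's proof: the key observation in both is that every edge of $E_k$ is carried into the metagraph because it is an intra-community edge in at least one layer, so the community's connectivity survives the contraction and, with appropriate weighting, is recovered by the community detection on OR-MG. Your version is simply more explicit --- the metanode-tiling step (i) and the candid acknowledgment that the final ``detection'' claim is algorithm-dependent are details the paper's two-sentence proof leaves implicit.
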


\begin{proof}
As per Algorithm~\ref{algo:CE-OR}, any edge that is an intra-community edge in at least one of the layers will be included in the metagraph. If a community in the OR-composed layer is formed only of intra-edges, then in the metagraph, given appropriate weighting function, these set of edges would also form a dense subgraph and hence a community.
\end{proof}

\begin{lemma}
If a community in the OR-composed layer contains significant number of edges that are bridge (inter-community) edges in {\bf all} the layers, then the community cannot be detected in its entirety by the CE-OR algorithm.
\end{lemma}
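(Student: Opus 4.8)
The plan is to argue directly from the metagraph construction in Algorithm~\ref{algo:CE-OR}, exploiting the fact that a metaedge $(U,V)$ is created only when some pair $u \in U$, $v \in V$ satisfies $(u,v) \in E_i^r$ or $(u,v) \in E_j^r$ for some $r$ --- that is, only when $(u,v)$ is an \emph{intra-community} edge in at least one layer. The first step is to observe that an edge which is a bridge (inter-community) edge in \textbf{all} layers is, by definition, contained in no $E_i^r$ and no $E_j^r$. Consequently such an edge fails the guard used in Lines~10--14 and contributes to no metaedge; it is structurally invisible to the community detection run on OR-MG (Line~16) and therefore cannot appear in any community produced by expanding the metanodes (Line~17).

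Next I would separate the statement into an edge-level and a vertex-level claim. Let $S$ be the target community in the OR-composed network and partition its edges into the set $B$ of edges that bridge in every layer and the set $I$ of edges that are intra-community in at least one layer. The observation above immediately gives the edge-level claim: CE-OR can never reproduce the edges in $B$, so $S$ is not recovered with its full edge set. For the stronger ``not in its entirety'' conclusion I would study the subgraph $S_I = (V_S, I)$ obtained by deleting the always-bridge edges. Since OR-MG encodes only connectivity arising from $I$, the metanodes covering $V_S$ are linked in OR-MG exactly as dictated by $S_I$, up to the collapsing of each common (CE-AND) community into a single metanode.

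The key step is then to make ``significant number of edges'' precise enough to force a split. The plan is to interpret it as: $|B|$ is large enough that $S_I$ is disconnected, or that the density of $S_I$ falls below the level at which $COMM$ keeps the corresponding vertices in one group. Under this interpretation, because OR-MG contains no metaedges to compensate for the deleted $B$, the community detection on OR-MG (Line~16) must distribute the metanodes covering $V_S$ over two or more distinct communities; after expansion these become separate OR-composed communities, so $S$ is not detected as a single community. This mirrors Lemma~\ref{lemma:3} for CE-AND, where edges that bridge in all layers are likewise discarded, the present statement being its OR-analogue propagated through the metagraph.

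The main obstacle will be the faithful quantification of ``significant''. The subtle point is that even after removing $B$, the surviving edges of $I$ could still hold $V_S$ together inside OR-MG, so I must argue that a sufficiently large $B$ genuinely severs the connectivity or the density on which $COMM$ relies, rather than merely thinning it. A secondary subtlety is to confirm that the two endpoints of an always-bridge edge cannot be absorbed into one metanode: this holds because a common community produced by CE-AND consists only of edges that are intra-community in both layers, so no all-layer bridge edge can lie inside a single metanode, which guarantees that the lost edge truly corresponds to missing inter-metanode connectivity.
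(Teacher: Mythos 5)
Your proposal is correct and follows essentially the same route as the paper's own (very brief) proof: both rest on the single observation that an edge which is a bridge in every layer never satisfies the guard in the metaedge-construction step, so it is absent from OR-MG and cannot contribute to any community recovered by expansion. Your additional care in separating the edge-level claim from the vertex-level "splits into multiple communities" claim, and in flagging that "significant" must be read as "enough to disconnect or sufficiently thin the surviving intra-community subgraph," goes beyond the paper's two-sentence argument but does not change the underlying approach.
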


\begin{proof}
Since the CE-OR algorithm analyzes only the intra-community edges, the bridge edges will not be part of the metagraph. Thus any community that is composed of mainly bridge edges will not be part of the metagraph and hence will not be detected by CE-OR.
\end{proof}

\begin{figure}[h]
    \centering
  \includegraphics[width=\linewidth]{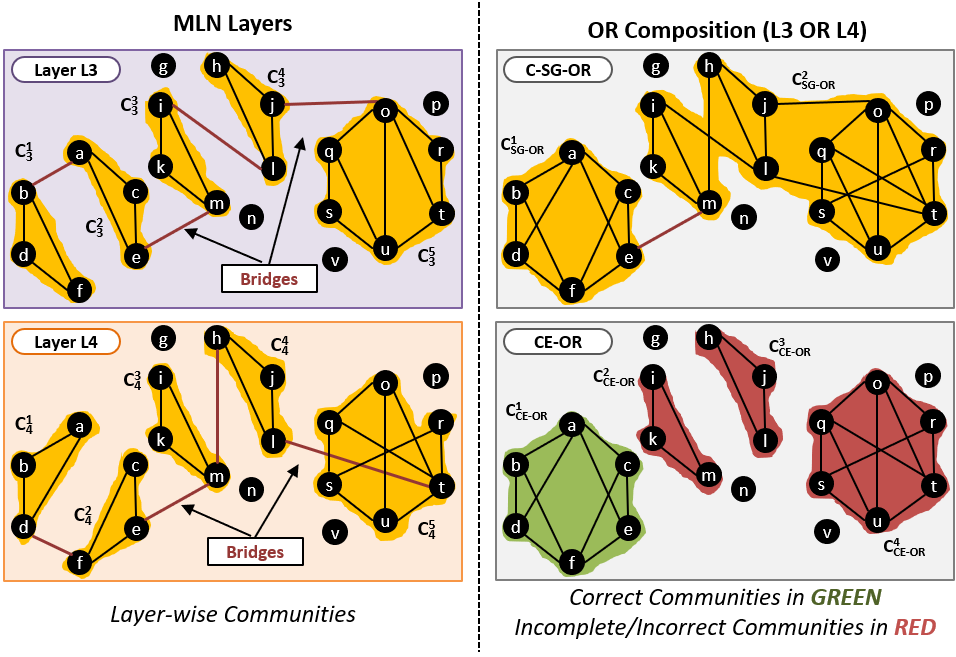}
    \caption{Effect of bridge edges on OR Composition}
    \label{fig:bridge-CE-OR}
\end{figure}

Figure \ref{fig:bridge-CE-OR}, illustrates the effect of bridge nodes on the accuracy of the communities found by CE-OR. The left-hand of panels show two layers of the network. The top right panel shows the communities obtained by the standard single network approach (C-SG-OR). The bottom right panel shows the communities obtained by our proposed CE-OR method.

Consider the community $C_{SG-OR}^2$ generated by C-SG-OR approach that has  edges (i, l), (h, m), (j, o) and (l, t) which are not intra-community edges in any of the  layers, and are present as bridge edges in only one of the layers. These edges will not be part of the metagraph and thus CE-OR does not know that they exist. CE-OR, thus, generates three communities $C_{CE-OR}^2$, $C_{CE-OR}^3$ and $C_{CE-OR}^4$, instead of merging them into one community, as per the C-SG-OR method.

However in the community $C_{SG-OR}^1$ generated by C-SG-OR, the edges (a, b) and (d, f) are bridge edges in one layer but are intra-community edges in another layer. Therefore these edges will be part of the metagraph. Thus CE-OR can use these edges and correctly generate the community $C_{CE-OR}^1$.

{\em The primary limitations} of our CE-AND and CE-OR  algorithms is due to the non-inclusion of bridge edges. In the AND-composed network, we rationalize this non-inclusion by positing that communities formed solely of bridge edges cannot be dense, and hence are not strong communities. In the OR-composed network, note that we only exclude an edge if it is a bridge edge in {\em all} the layers. This is an infrequent case where bridge nodes from all layers come together to form communities. Our experiments in Section \ref{sec:experiments}, justify this policy of not including bridge edges by demonstrating that the normalized mutual information (NMI) values between the communities returned by CE-AND and C-SG-AND are in general high. 

\section{Empirical Results}
\label{sec:experiments}

In this section we compare the performance and accuracy of our proposed algorithms with the ground truth results obtained by the standard methods, C-SG-AND and C-SG-OR.

\subsection{Experimental Setup:}


 Since the results of community detection depend heavily on the type of algorithm used~\cite{Abrahao:2012}, to control this parameter in the experiment we use the popular community detection algorithm {\em Infomap} \cite{InfoMap2014}, both to find the communities in the single network approach and the network decomposition approach.  Our algorithms were  implemented in C++ and were executed on a Linux machine with 8 GB RAM and installed with UBUNTU 16.10.

{\em Datasets Used.} We performed our experiments on multiplexes created from three real-world datasets and one synthetic dataset created using the RMAT~\cite{RMAT} graph generator. We selected the real-world datasets such that they were sufficiently large and contained communities. To test on larger networks with more vertices, we created the synthetic RMAT dataset. The details of the datasets are as follows (also see Table~\ref{tab:two});


\begin{itemize}
\item \textbf{\textit{IMDb:}} From the IMDB dataset~\cite{data/type2/IMDb}, we created the following three layers in the multiplex, with the nodes representing the actors. In the first layer, (L1, co-acting) two nodes  are connected if they co-acted in at least one movie. In the second layer,  (L2, rating) two nodes are connected if the average  ratings of the movies where they acted  were similar. In the third layer, (L3, genre) two nodes are connected if they acted in movies of similar genres. 

\item \textbf{\textit{DBLP:}} From the DBLP dataset of academic publications~\cite{data/type2/DBLP}, we selected all papers published from 2000-2018 in top three conferences VLDB (L1), SIGMOD(L2) and ICDM (L3). The nodes  were the authors. Two authors in each layer were connected if they had published a paper in the conference corresponding to the layer.

\item \textbf{\textit{Accident:}} From the dataset of road accidents that occurred in the United Kingdom in 2014 \cite{UKDataset2014}, we represented each  accident as a node.  Two nodes are connected in a layer if they occurred within 10 miles of each other and have similar Light (L1),  Weather (L2) or  Road Surface Conditions (L3).

\item \textbf{\textit{RMAT:}} The RMAT generator creates networks based on the Kronecker product of a matrix. We set the number of vertices to $2^{15}$ and the edges
to roughly eight times the number of vertices. We set the probabilities in each quadrant of the matrix as a=0.65, b=c=d=0.15 to create  a scale-free graph.

The first layer (L1) was the graph obtained by the generator. We applied cross perturbation to the other layers. That is we selected two edges (a, b) and (c, d), and replaced them with new edges (a, c) and (b, d).  Thus the number of edges remain the same, but the degree distribution and the structure of the graph changes.

In layer L2 we applied perturbation to 1\% of the edges and in layer L3 to 5\% of the edges. We limited the number of perturbations because if the network structure is significantly changed between the layers, the number of bridge edges also increase.

\end{itemize}

{\em Ground Truth and Accuracy Metrics:} Since our goal is to achieve the results obtained by the standard C-SG-AND and C-SG-OR methods, we use the communities obtained from these methods as the ground truth. We disregard communities of just one vertex, since these result due to an artifact of the algorithm rather than provide any meaningful analysis. We use two metrics to evaluate the accuracy of the communities - i) Normalized Mutual Information (NMI) that measures the quality with respect to the participating entity nodes only and ii)  modified-NMI that also takes into account the topology of the communities. For both metrics higher is better, with maximum value of 1 and minimum of 0(definitions in ~\cite{Labatut13}).

Each multiplex has 3 layers. Thus, a total of 4 compositions are possible (3 for 2-layers and 1 3-layers). Thus we compare results for 8 (4 combinations X 2 Boolean operations) composed networks.

\normalsize{
\begin{table}
\centering
\begin{tabular}{ |c||c|c|c|c| }
 \hline
 {Name} & {Vertices} & {Edges}  & {Edges}  & { Edges}  \\
      &          & { in L1} & { in L2}  & { in L3}   \\ \hline
 IMDB & 9,485 & 45,581 & 13,945,912 & 996,527 \\
 DBLP & 17,204 & 5,831 & 17,737 & 12,986 \\
 Accident & 5000 & 193,860 & 235,175 & 216,397 \\
 RMAT &  32,768 & 230,445 & 230,445 & 230,445 \\
 \hline

\end{tabular}

\caption {Summary of the sizes of the multiplexes.}
\label{tab:two}
\end{table}

}

\subsection{Accuracy of the Aggregation Algorithms.}

For the AND-composed networks we show in Figure \ref{fig:AND-acc}, the average NMI and m-NMI of all the four multiplexes with respect to the ground truth for the CV-AND and CE-AND methods. The results show that the \textit{accuracy obtained with CE-AND is higher than that from CV-AND.} 


\begin{table*}
\centering
\begin{tabular}{ |c||c|c||c|c||c|c||c|c| }
 \hline

  Multiplex & \multicolumn{2}{c||}{L1, L2} & \multicolumn{2}{c||}{L1, L3}  & \multicolumn{2}{c||}{L2, L3} & \multicolumn{2}{c|} {L1, L2, L3}\\
      &NMI & m-NMI &NMI & m-NMI &NMI & m-NMI &NMI & m-NMI \\ \hline
      \multicolumn{9}{|c|}{ Accuracy Values using CE-AND} \\ \hline
 IMDB & .97 & .93 &.98 & .97 & .88 &.86 &.99 &.99 \\
 DBLP & .92 & .84 &.99 & .96 & .98 &.96 & .98 & .95\\
 Accident &.96 & .98 & .94 & .98 & .91 &.96 & .88 & .95 \\
 RMAT &  .92 & .82 & .90 & .79 & .90 & .78 & .90 & .77 \\
 \hline
   \multicolumn{9}{|c|}{ Accuracy Values using CE-OR using Fractional Weights} \\ \hline
 IMDB & $<$.01 & $<$.01 & .97 & .99 & 1 & 1 & 1 & 1 \\
 DBLP & .83 & .79 & .87 & .80 & .75 & .60 & .71 & .56 \\
 Accident & .88 & .93 & .94 & .98 & .98 & .99 & .86 & .93 \\
 RMAT &  .74 & .64 & .76 & .59 & .75 & .55 & .73 & .54 \\
 \hline

\end{tabular}

\caption {Accuracy Values using CE-AND on the different compositions of the datasets.}
\label{tab:acc}
\end{table*}

\begin{figure}[h]
    \centering
  \includegraphics[width=\linewidth]{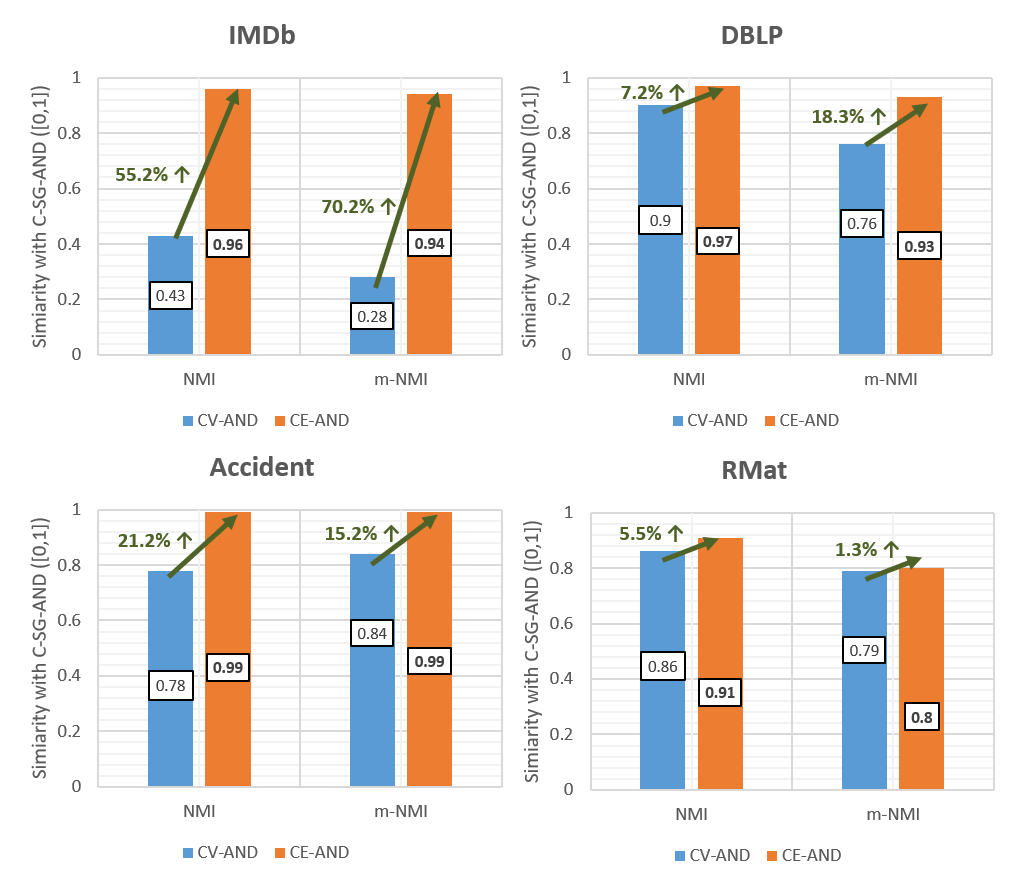}
    \caption{Comparison of Accuracy of CE-AND and CV-AND based on NMI and m-NMI.}
    \label{fig:AND-acc}
\end{figure}

For the OR-composed networks we show in Figure \ref{fig:OR-acc}, the average NMI and m-NMI of all the four multiplexes with respect to the ground truth for the two weighting metrics; Aggregation ($w_a$) and Fractional ($w_f$). The results show that the \textit{accuracy obtained using both the metrics are similar}.


\begin{figure}[h]
    \centering
  \includegraphics[width=\linewidth]{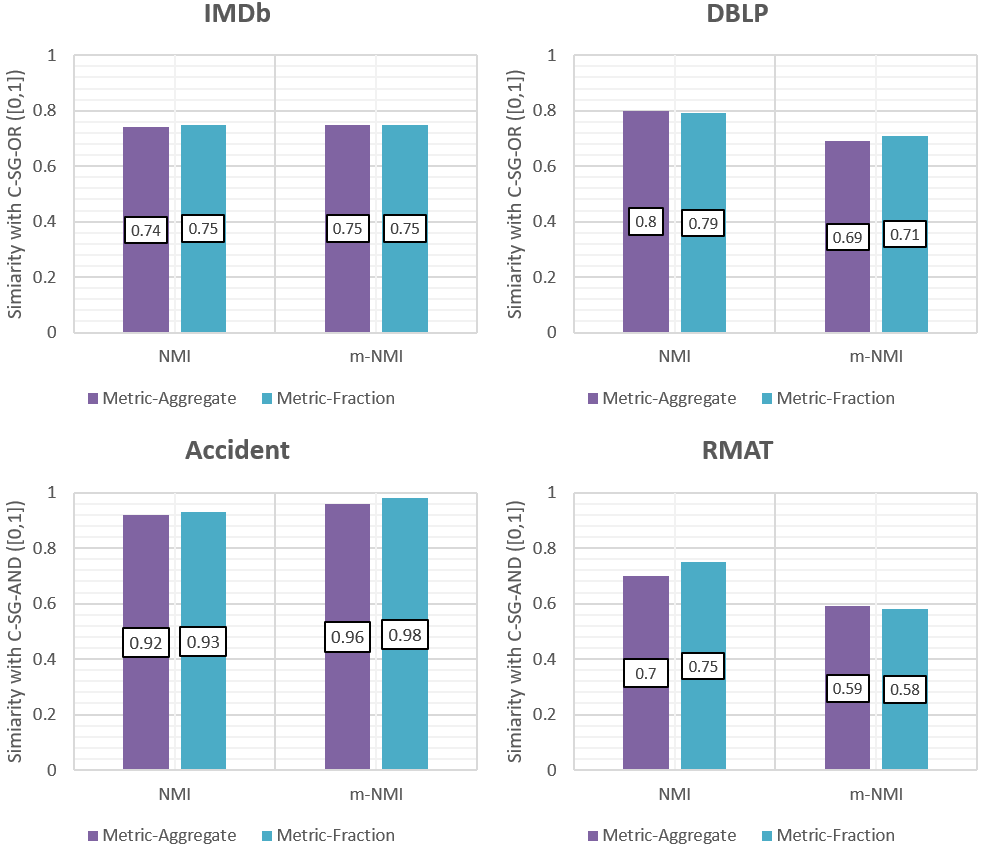}
    \caption{Accuracy of CE-OR with Different Weighting schemes  based on NMI and m-NMI.}
    \label{fig:OR-acc}
\end{figure}

In Table~\ref{tab:acc} we provide the accuracy values for all the different layer compositions with respect to CE-AND for the AND composition and CE-OR with Fractional Weights. As can be seen nearly all the values are high, $ \ge 70\%$.

Some low values occur for the CE-OR method. An egregious example is IMDb (L1, L2) for which the accuracy results are less than 1\%!  In this case the metagraph had 193 nodes, and on running the community detection algorithm 56 communities were obtained. However, the ground truth communities obtained by C-SG-OR had only 2 communities. This happened because there existed many bridge edges in the layers that were not included in the metagraph. Moreover, because the communities represented in the metanodes were small in size, the weights were also lower and could not combine the communities.


\subsection{Performance of the Aggregation Algorithms}
We now compare the time taken to obtain the communities using the aggregation methods (CV-AND, CE-AND and CE-OR) with respect to  C-SG-AND and C-SG-OR.

Figure \ref{fig:AND-time} shows that the time to compute the communities over all the 4-composed layers is significantly lower for both CV-AND and CE-AND methods than C-SG-AND. When the layers are sparse, CE-AND will be faster than CV-AND, as can be seen for DBLP multiplex. However if the network layers are dense, then the edge-based intersection approach of CE-AND has a higher cost as compared to the CV-AND.


\begin{figure}[h]
    \centering
  \includegraphics[width=\linewidth]{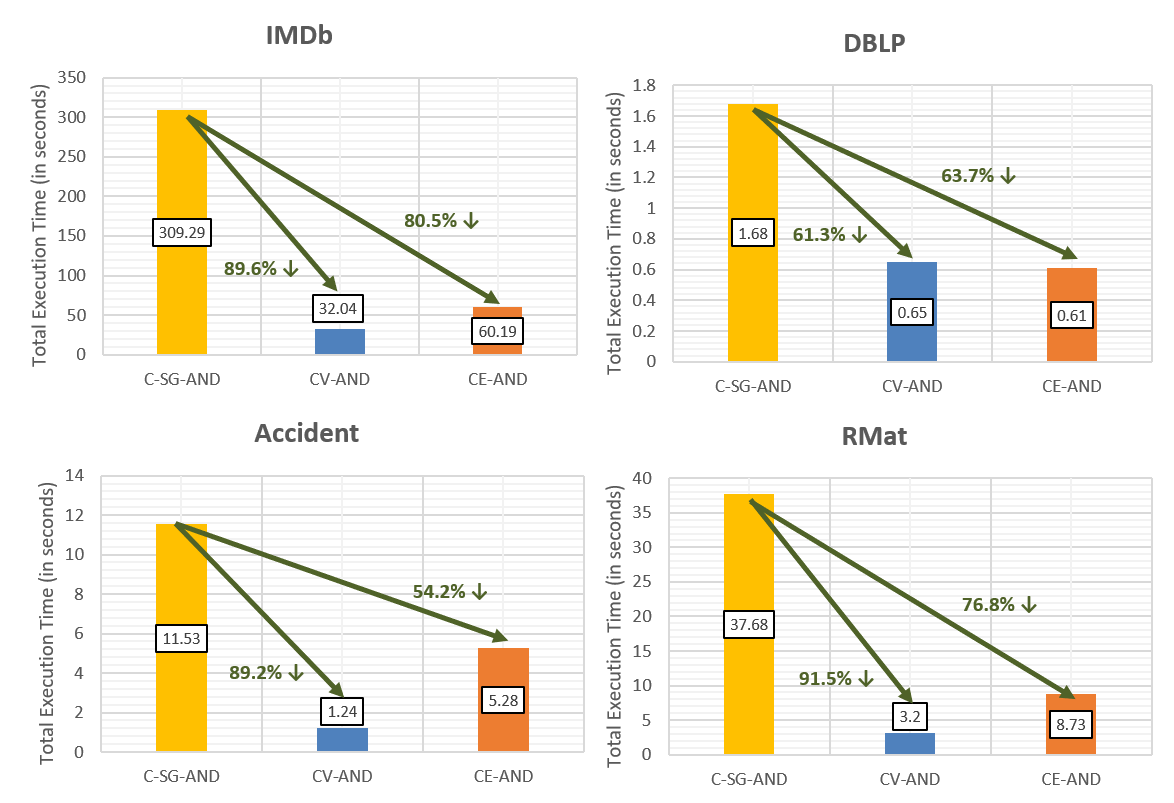}
    \caption{Efficiency of CV-AND and CE-AND as compared to C-SG-AND}
    \label{fig:AND-time}
\end{figure}

\begin{figure}[h]
    \centering
  \includegraphics[width=\linewidth]{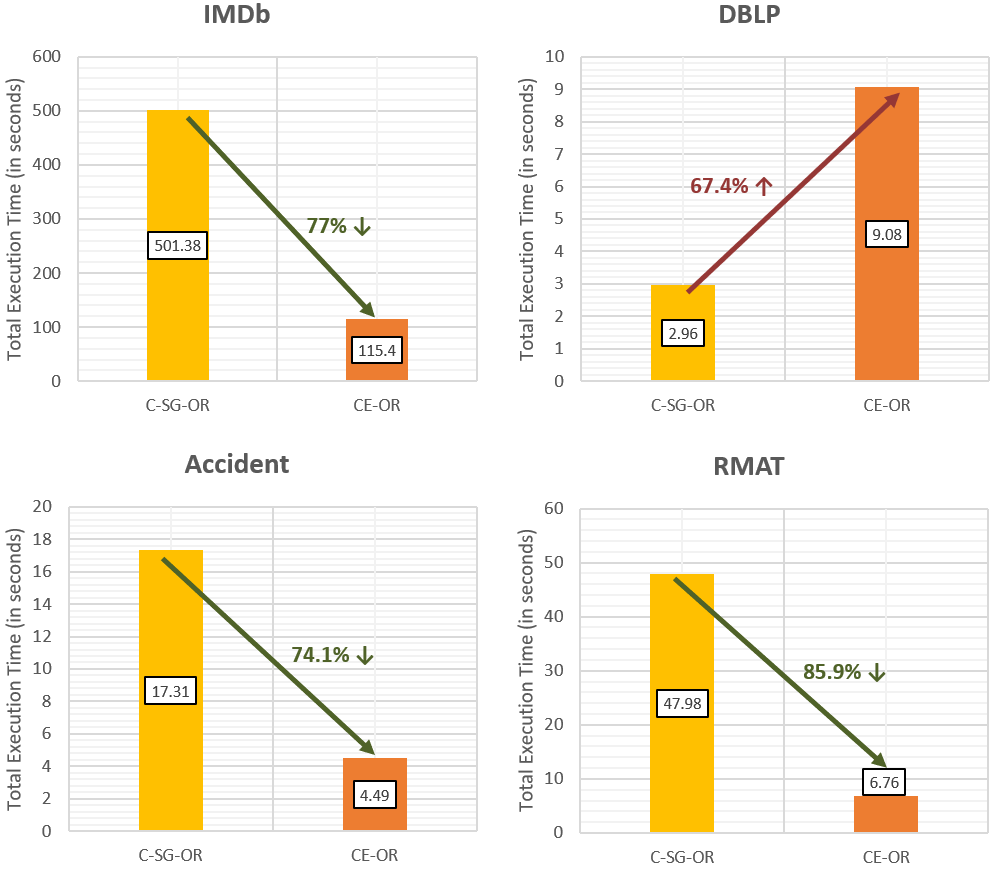}
    \caption{Efficiency of CE-OR as compared to C-SG-OR}
    \label{fig:OR-time}
\end{figure}

Figure~\ref{fig:OR-time} gives the time for executing CE-OR. For CE-OR, CE-AND is used as a subroutine. One scan of \textit{community edges} is required to generate the meta graph (OR-MG) on which we apply Infomap. If the layers are sparse and the multiplex contains many bridge nodes, then cost of generating the meta graph and applying Infomap will become an overhead as compared to simply applying Infomap on OR graph (C-SG-OR approach). This can be seen from the DBLP multiplex where sparse layers (density of densest layer (SIGMOD) = 0.0001!) make the CE-OR 67\% less efficient as compared to C-SG-OR. However, \textit{for multiplexes with fewer bridge edges (IMDb, Accident), CE-OR is significantly faster.} 

\section{Related Work}
\label{sec:relatedWork}



{\em Analysis of Multilayer Networks.}  Homogeneous multilayer networks~\cite{MultiLayerSurveyKivelaABGMP13,Boccaletti20141} are used to handle interactions among the same set of entities such as co-authorship  \cite{MinSubMulLayer2012}, citation across different topics \cite{ng2011multirank}, 
and relationships across different social media platforms \cite{magnani2013formation}.  Multiplexes have also been analyzed in form of  adjacency tensors\cite{de2013mathematical,maruhashi2011multiaspectforensics}. 

Community detection in temporal homogeneous multilayer networks use spectral optimization  of the modularity function  \cite{zhang2017modularity,bazzi2016community}. Techniques based on information theory have been proposed for reducing the number of layers in multilayer protein-protein interactions ~\cite{LayerAggDomenicoNAL14}. Software for creating or analysing multiplex networks include Muxviz~\cite{de2014muxviz,softwareMN/muxviz}, MAMMULT~\cite{battiston2014structural, softwareMN/MAMMULT}, MultiTensor~\cite{de2017community} and Pymnet~\cite{softwareMN/Pymnet}. However, they support only a few simple analysis functions, not community detection. 

\textit{Community Detection} is a well-studied problem in monoplex (single network) analysis.  A large number of competitive approximation algorithms exist (see reviews in 
\cite{Fortunato2009, Xie2013}) including algorithms for 
~\cite{Yang10}, weighted 
~\cite{Liu2011}, dynamic networks
~\cite{porterchaos13}, as well as parallel algorithms 
~\cite{StaudtSM14,BhowmickSrinivasan13}.

Recent work has also extended community detection algorithms for multilayer networks based on matrix factorization \cite{dong2012clustering}, 
pattern mining \cite{silva2012mining}, 
cluster expansion philosophy \cite{li2008scalable} and Bayesian probabilistic models \cite{xu2012model}, but all of these use the single layer approach.  
To the best of our knowledge,\textit{ we are the first to infer the communities of the combined network from communities of individual layers}.
\section{Conclusion and Future Work}
\label{sec:conclusions}
In this paper, we presented algorithms for efficiently finding communities in Boolean composed layers of  multiplex networks. The results show that for most cases our algorithms are significantly faster than the standard methods and produce results of similar quality. The only cases that our algorithm fails is when the layers have significantly more bridge edges.

In future plan, we will investigate how to include some percentage of bridge edges without reducing the computation time. We also plan to explore adaptive techniques that can select between the network decomposition and standard methods as suitable. Finally, we also aim to develop methods to efficiently find communities when the NOT operator is used.


%

\ifCLASSOPTIONcaptionsoff
  \newpage
\fi



\bibliographystyle{IEEEtran}
\bibliography{bibliography/santraResearch}

\begin{thebibliography}{10}
\providecommand{\url}[1]{#1}
\csname url@samestyle\endcsname
\providecommand{\newblock}{\relax}
\providecommand{\bibinfo}[2]{#2}
\providecommand{\BIBentrySTDinterwordspacing}{\spaceskip=0pt\relax}
\providecommand{\BIBentryALTinterwordstretchfactor}{4}
\providecommand{\BIBentryALTinterwordspacing}{\spaceskip=\fontdimen2\font plus
\BIBentryALTinterwordstretchfactor\fontdimen3\font minus
  \fontdimen4\font\relax}
\providecommand{\BIBforeignlanguage}[2]{{%
\expandafter\ifx\csname l@#1\endcsname\relax
\typeout{** WARNING: IEEEtran.bst: No hyphenation pattern has been}%
\typeout{** loaded for the language `#1'. Using the pattern for}%
\typeout{** the default language instead.}%
\else
\language=\csname l@#1\endcsname
\fi
#2}}
\providecommand{\BIBdecl}{\relax}
\BIBdecl

\bibitem{data/type2/IMDb}
``The internet movie database,''
  \url{ftp://ftp.fu-berlin.de/pub/misc/movies/database/}.

\bibitem{ICCS/SantraBC17}
\BIBentryALTinterwordspacing
A.~Santra, S.~Bhowmick, and S.~Chakravarthy, ``Efficient community re-creation
  in multilayer networks using boolean operations,'' in \emph{International
  Conference on Computational Science, {ICCS} 2017, 12-14 June 2017, Zurich,
  Switzerland}, 2017, pp. 58--67. [Online]. Available:
  \url{https://doi.org/10.1016/j.procs.2017.05.246}
\BIBentrySTDinterwordspacing

\bibitem{Abrahao:2012}
\BIBentryALTinterwordspacing
B.~Abrahao, S.~Soundarajan, J.~Hopcroft, and R.~Kleinberg, ``On the
  separability of structural classes of communities,'' in \emph{Proceedings of
  the 18th ACM SIGKDD International Conference on Knowledge Discovery and Data
  Mining}, ser. KDD '12.\hskip 1em plus 0.5em minus 0.4em\relax New York, NY,
  USA: ACM, 2012, pp. 624--632. [Online]. Available:
  \url{http://doi.acm.org/10.1145/2339530.2339631}
\BIBentrySTDinterwordspacing

\bibitem{InfoMap2014}
\BIBentryALTinterwordspacing
L.~Bohlin, D.~Edler, A.~Lancichinei, and M.~Rosvall, ``Community detection and
  visualization of networks with the map equation framework,'' 2014. [Online].
  Available:
  \url{http://www.mapequation.org/assets/publications/mapequationtutorial.pdf}
\BIBentrySTDinterwordspacing

\bibitem{RMAT}
D.~Chakrabarti, Y.~Zhan, and C.~Faloutsos, ``{R-MAT}: A recursive model for
  graph mining,'' in \emph{In SDM}, 2004.

\bibitem{data/type2/DBLP}
``Dblp dataset,'' \url{http://dblp.uni-trier.de/xml/}.

\bibitem{UKDataset2014}
``Road safety - accidents 2014,''
  \url{https://data.gov.uk/dataset/road-accidents-safety-data/resource/1ae84544-6b06-425d-ad62-c85716a80022}.

\bibitem{Labatut13}
V.~Labatut, ``Generalized measures for the evaluation of community detection
  methods,'' \emph{CoRR}, vol. abs/1303.5441, 2013.

\bibitem{MultiLayerSurveyKivelaABGMP13}
\BIBentryALTinterwordspacing
M.~Kivel{\"{a}}, A.~Arenas, M.~Barthelemy, J.~P. Gleeson, Y.~Moreno, and M.~A.
  Porter, ``Multilayer networks,'' \emph{CoRR}, vol. abs/1309.7233, 2013.
  [Online]. Available: \url{http://arxiv.org/abs/1309.7233}
\BIBentrySTDinterwordspacing

\bibitem{Boccaletti20141}
\BIBentryALTinterwordspacing
S.~Boccaletti, G.~Bianconi, R.~Criado, C.~del Genio, J.~Gómez-Gardeñes,
  M.~Romance, I.~Sendiña-Nadal, Z.~Wang, and M.~Zanin, ``The structure and
  dynamics of multilayer networks,'' \emph{Physics Reports}, vol. 544, no.~1,
  pp. 1 -- 122, 2014, the structure and dynamics of multilayer networks.
  [Online]. Available:
  \url{http://www.sciencedirect.com/science/article/pii/S0370157314002105}
\BIBentrySTDinterwordspacing

\bibitem{MinSubMulLayer2012}
B.~Boden, S.~Günnemann, H.~Hoffmann, and T.~Seidl, ``Mining coherent subgraphs
  in multi-layer graphs with edge labels,'' in \emph{Proc. of the 18th ACM
  SIGKDD Conference on Knowledge Discovery and Data Mining (SIGKDD 2012),
  Beijing, China}, 2012, pp. 1258--1266.

\bibitem{ng2011multirank}
M.~K.-P. Ng, X.~Li, and Y.~Ye, ``Multirank: co-ranking for objects and
  relations in multi-relational data,'' in \emph{Proceedings of the 17th ACM
  SIGKDD international conference on Knowledge discovery and data
  mining}.\hskip 1em plus 0.5em minus 0.4em\relax ACM, 2011, pp. 1217--1225.

\bibitem{magnani2013formation}
M.~Magnani and L.~Rossi, ``Formation of multiple networks,'' in
  \emph{International Conference on Social Computing, Behavioral-Cultural
  Modeling, and Prediction}.\hskip 1em plus 0.5em minus 0.4em\relax Springer,
  2013, pp. 257--264.

\bibitem{de2013mathematical}
M.~De~Domenico, A.~Sol{\'e}-Ribalta, E.~Cozzo, M.~Kivel{\"a}, Y.~Moreno, M.~A.
  Porter, S.~G{\'o}mez, and A.~Arenas, ``Mathematical formulation of multilayer
  networks,'' \emph{Physical Review X}, vol.~3, no.~4, p. 041022, 2013.

\bibitem{maruhashi2011multiaspectforensics}
K.~Maruhashi, F.~Guo, and C.~Faloutsos, ``Multiaspectforensics: Pattern mining
  on large-scale heterogeneous networks with tensor analysis,'' in
  \emph{Advances in Social Networks Analysis and Mining (ASONAM), 2011
  International Conference on}.\hskip 1em plus 0.5em minus 0.4em\relax IEEE,
  2011, pp. 203--210.

\bibitem{zhang2017modularity}
H.~Zhang, C.-D. Wang, J.-H. Lai, and S.~Y. Philip, ``Modularity in complex
  multilayer networks with multiple aspects: a static perspective,'' in
  \emph{Applied Informatics}, vol.~4, no.~1.\hskip 1em plus 0.5em minus
  0.4em\relax Springer Berlin Heidelberg, 2017, p.~7.

\bibitem{bazzi2016community}
M.~Bazzi, M.~A. Porter, S.~Williams, M.~McDonald, D.~J. Fenn, and S.~D.
  Howison, ``Community detection in temporal multilayer networks, with an
  application to correlation networks,'' \emph{Multiscale Modeling \&
  Simulation}, vol.~14, no.~1, pp. 1--41, 2016.

\bibitem{LayerAggDomenicoNAL14}
\BIBentryALTinterwordspacing
M.~D. Domenico, V.~Nicosia, A.~Arenas, and V.~Latora, ``Layer aggregation and
  reducibility of multilayer interconnected networks,'' \emph{CoRR}, vol.
  abs/1405.0425, 2014. [Online]. Available:
  \url{http://arxiv.org/abs/1405.0425}
\BIBentrySTDinterwordspacing

\bibitem{de2014muxviz}
M.~De~Domenico, M.~A. Porter, and A.~Arenas, ``Muxviz: a tool for multilayer
  analysis and visualization of networks,'' \emph{Journal of Complex Networks},
  p. cnu038, 2014.

\bibitem{softwareMN/muxviz}
``Muxviz: Framework for the multilayer analysis and visualization of
  networks,'' \url{http://muxviz.net/}.

\bibitem{battiston2014structural}
F.~Battiston, V.~Nicosia, and V.~Latora, ``Structural measures for multiplex
  networks,'' \emph{Physical Review E}, vol.~89, no.~3, p. 032804, 2014.

\bibitem{softwareMN/MAMMULT}
``Mammult: Collection of programs (c and python) for the analysis and modeling
  of multilayer networks,''
  \url{http://complex.ffn.ub.es/~lasagne/news_full.php?q=MAMMULT}.

\bibitem{de2017community}
C.~De~Bacco, E.~A. Power, D.~B. Larremore, and C.~Moore, ``Community detection,
  link prediction, and layer interdependence in multilayer networks,''
  \emph{Physical Review E}, vol.~95, no.~4, p. 042317, 2017.

\bibitem{softwareMN/Pymnet}
M.~Kivelä, ``Pymnet: Free library for analysing multilayer networks,''
  \url{http://people.maths.ox.ac.uk/kivela/mln_library/}.

\bibitem{Fortunato2009}
\BIBentryALTinterwordspacing
S.~Fortunato and A.~Lancichinetti, ``Community detection algorithms: A
  comparative analysis: Invited presentation, extended abstract,'' in
  \emph{Proceedings of the Fourth International ICST Conference on Performance
  Evaluation Methodologies and Tools}, ser. VALUETOOLS '09.\hskip 1em plus
  0.5em minus 0.4em\relax ICST, Brussels, Belgium, Belgium: ICST (Institute for
  Computer Sciences, Social-Informatics and Telecommunications Engineering),
  2009, pp. 27:1--27:2. [Online]. Available:
  \url{http://dl.acm.org/citation.cfm?id=1698822.1698858}
\BIBentrySTDinterwordspacing

\bibitem{Xie2013}
\BIBentryALTinterwordspacing
J.~Xie, S.~Kelley, and B.~K. Szymanski, ``Overlapping community detection in
  networks: The state-of-the-art and comparative study,'' \emph{ACM Comput.
  Surv.}, vol.~45, no.~4, pp. 43:1--43:35, Aug. 2013. [Online]. Available:
  \url{http://doi.acm.org/10.1145/2501654.2501657}
\BIBentrySTDinterwordspacing

\bibitem{Yang10}
T.~Yang, Y.~Chi, S.~Zhu, Y.~Gong, and R.~Jin, ``Directed network community
  detection: A popularity and productivity link model.''

\bibitem{Liu2011}
\BIBentryALTinterwordspacing
H.~Liu, X.~Qin, H.~Yun, and Y.~Wu, ``A community detecting algorithm in
  directed weighted networks,'' in \emph{Electrical Engineering and Control},
  ser. Lecture Notes in Electrical Engineering, M.~Zhu, Ed.\hskip 1em plus
  0.5em minus 0.4em\relax Springer Berlin Heidelberg, 2011, vol.~98, pp.
  11--17. [Online]. Available:
  \url{http://dx.doi.org/10.1007/978-3-642-21765-4_2}
\BIBentrySTDinterwordspacing

\bibitem{porterchaos13}
\BIBentryALTinterwordspacing
D.~S. Bassett, M.~A. Porter, N.~F. Wymbs, S.~T. Grafton, J.~M. Carlson, and
  P.~J. Mucha, ``Robust detection of dynamic community structure in networks,''
  \emph{Chaos: An Interdisciplinary Journal of Nonlinear Science}, vol.~23,
  no.~1, pp.~--, 2013. [Online]. Available:
  \url{http://scitation.aip.org/content/aip/journal/chaos/23/1/10.1063/1.4790830}
\BIBentrySTDinterwordspacing

\bibitem{StaudtSM14}
C.~Staudt, A.~Sazonovs, and H.~Meyerhenke, ``Networkit: An interactive tool
  suite for high-performance network analysis,'' \emph{CoRR}, vol.
  abs/1403.3005, 2014.

\bibitem{BhowmickSrinivasan13}
\BIBentryALTinterwordspacing
S.~Bhowmick and S.~Srinivasan, ``\BIBforeignlanguage{English}{A template for
  parallelizing the louvain method for modularity maximization},'' in
  \emph{\BIBforeignlanguage{English}{Dynamics On and Of Complex Networks,
  Volume 2}}, ser. Modeling and Simulation in Science, Engineering and
  Technology, A.~Mukherjee, M.~Choudhury, F.~Peruani, N.~Ganguly, and B.~Mitra,
  Eds.\hskip 1em plus 0.5em minus 0.4em\relax Springer New York, 2013, pp.
  111--124. [Online]. Available:
  \url{http://dx.doi.org/10.1007/978-1-4614-6729-8_6}
\BIBentrySTDinterwordspacing

\bibitem{dong2012clustering}
X.~Dong, P.~Frossard, P.~Vandergheynst, and N.~Nefedov, ``Clustering with
  multi-layer graphs: A spectral perspective,'' \emph{IEEE Transactions on
  Signal Processing}, vol.~60, no.~11, pp. 5820--5831, 2012.

\bibitem{silva2012mining}
A.~Silva, W.~Meira~Jr, and M.~J. Zaki, ``Mining attribute-structure correlated
  patterns in large attributed graphs,'' \emph{Proceedings of the VLDB
  Endowment}, vol.~5, no.~5, pp. 466--477, 2012.

\bibitem{li2008scalable}
H.~Li, Z.~Nie, W.-C. Lee, L.~Giles, and J.-R. Wen, ``Scalable community
  discovery on textual data with relations,'' in \emph{Proceedings of the 17th
  ACM conference on Information and knowledge management}.\hskip 1em plus 0.5em
  minus 0.4em\relax ACM, 2008, pp. 1203--1212.

\bibitem{xu2012model}
Z.~Xu, Y.~Ke, Y.~Wang, H.~Cheng, and J.~Cheng, ``A model-based approach to
  attributed graph clustering,'' in \emph{Proceedings of the 2012 ACM SIGMOD
  international conference on management of data}.\hskip 1em plus 0.5em minus
  0.4em\relax ACM, 2012, pp. 505--516.

\end{thebibliography}
%

%


\end{document}